\documentclass[sigconf, nonacm]{acmart}

\usepackage{booktabs}
\usepackage{balance}
\usepackage{marvosym}
\usepackage{enumitem}
\usepackage{multirow}
\usepackage{bm}
\usepackage[ruled,linesnumbered,vlined]{algorithm2e}
\usepackage{amsthm} % for proof without number
\usepackage{hyperref}

\usepackage[table]{xcolor}

\definecolor{ParaMarkColor}{RGB}{0,0,0}

\newcommand{\paramark}[1]{%
  \textcolor{ParaMarkColor}{#1}%
}

\definecolor{bestgray}{gray}{0.85}
\definecolor{secondgray}{gray}{0.93}

\newcommand{\bestcell}[1]{\cellcolor{bestgray}{#1}}
\newcommand{\secondcell}[1]{\cellcolor{secondgray}{#1}}

\newcommand{\gbest}[1]{\textbf{\underline{#1}}}
\newcommand{\gsecond}[1]{\underline{#1}}

\newtheorem{myDef}{\noindent \textbf{Definition}}

\newcommand\vldbdoi{XX.XX/XXX.XX}
\newcommand\vldbpages{XXX-XXX}
\newcommand\vldbvolume{14}
\newcommand\vldbissue{1}
\newcommand\vldbyear{2020}
\newcommand\vldbauthors{\authors}
\newcommand\vldbtitle{\shorttitle} 
\newcommand\vldbavailabilityurl{https://github.com/LIWEIDENG0830/LB-TrajRep}
\newcommand\vldbpagestyle{plain} 

\begin{document}
% \title{Revisiting Trajectory Similarity Learning via Lower-Bound Representations}
\title{Using Lower-Bound Representations for Trajectory Similarity Learning}

%%
%% The "author" command and its associated commands are used to define the authors and their affiliations.
\author{Liwei Deng}
\affiliation{%
  \institution{Aalborg University}
  % \streetaddress{P.O. Box 1212}
  % \city{Aalborg}
  % \state{Denmark}
  % \postcode{43017-6221}
}
\email{lide@cs.aau.dk}

\author{Haotian Meng, Yupu Zhang}
% \orcid{0000-0002-1825-0097}
\affiliation{%
  \institution{University of Electronic Science and Technology of China}
  % \streetaddress{1 Th{\o}rv{\"a}ld Circle}
  % \city{Hekla}
  % \country{China}
  % \city{Shenzhen}
  % \country{China}
}
\email{menghaotian, zhangyupu@std.uestc.edu.cn}

% \author{Yupu Zhang}
% % \orcid{0000-0002-1825-0097}
% \affiliation{%
%   \institution{University of Electronic Science and Technology of China}
%   % \streetaddress{1 Th{\o}rv{\"a}ld Circle}
%   % \city{Hekla}
%   % \country{China}
%   % \city{Chengdu}
%   % \country{China}
% }
% \email{zhangyupu@std.uestc.edu.cn}

\author{Yan Zhao}
% \orcid{0000-0002-1825-0097}
\affiliation{%
  \institution{University of Electronic Science and Technology of China\Letter}
  % \streetaddress{1 Th{\o}rv{\"a}ld Circle}
  % \city{Hekla}
  % \country{China}
  % \city{Chengdu}
  % \country{China}
}
\email{zhaoyan@uestc.edu.cn}

\author{Torben Bach Pedersen}
\affiliation{%
  \institution{Aalborg University}
  % \city{Aalborg}
  % \country{Denmark}
}
\email{tbp@cs.aau.dk}

\author{Kai Zheng}
\affiliation{%
  \institution{University of Electronic Science and Technology of China\Letter} % \Letter
  % \streetaddress{P.O. Box 1212}
  % \city{Dublin}
  % \state{Ireland}
  % \postcode{43017-6221}
  % \city{Chengdu}
  % \country{China}
}
\email{zhengkai@uestc.edu.cn}
\authornote{\Letter \ The corresponding author.}

\author{Christian S. Jensen}
\affiliation{%
  \institution{Aalborg University}
  % \city{Aalborg}
  % \country{Denmark}
}
\email{csj@cs.aau.dk}

%%
%% The abstract is a short summary of the work to be presented in the
%% article.
\begin{abstract}

Trajectory similarity learning is fundamental to efficient trajectory retrieval under complex distance measures. Existing learning-based methods typically rely on embeddings trained to approximate trajectory distances or rankings, but they often lack guarantees with respect to the original distances, exhibit unstable performance across distance measures, and incur substantial training costs.
% In this work, we revisit trajectory similarity learning from a lower-bound representation perspective. We propose a unified pivot-based framework that represents trajectories by their distances to a set of pivot points, inducing admissible and interpretable lower bounds for multiple classical trajectory distances, including Dynamic Time Warping (DTW), Hausdorff distance, and Discrete Fr\'echet Distance (DFD). The proposed framework naturally supports both metric and non-metric distances and is fully compatible with standard single-vector retrieval pipelines. To improve representation quality, we further develop two data-driven pivot selection strategies that explicitly optimize lower-bound tightness and prioritize hard near-neighbor trajectory pairs.
We revisit trajectory similarity learning from a lower-bound representation perspective and \paramark{propose LB-TrajRep, a unified lower-bound representation framework independent of deep neural embeddings. This framework constructs single-vector representations from a set of lower-bound components}, enabling admissible and interpretable lower bounds for multiple classical trajectory distances, including Dynamic Time Warping (DTW), Hausdorff distance, and Discrete Fr\'echet Distance (DFD).
% , through appropriate component instantiations. 
Within this framework, we instantiate point-pivot components, which naturally support both metric and non-metric distances and remain compatible with standard vector-based retrieval pipelines. To improve ranking quality, we develop two data-driven pivot selection strategies that explicitly optimize lower-bound tightness and prioritize hard near-neighbor trajectory pairs, respectively.
Extensive experiments on real-world trajectory datasets show that the proposed lower-bound representations are able to consistently outperform state-of-the-art neural trajectory embeddings across diverse distance measures, improving top-$k$ ranking accuracy by up to 20\%--60\% on the Hausdorff distance and DFD and by 15\%--40\% on DTW.
% , while achieving substantially improved ranking stability.

\end{abstract}

\maketitle

%%% do not modify the following VLDB block %%
%%% VLDB block start %%%
\pagestyle{\vldbpagestyle}
\begingroup\small\noindent\raggedright\textbf{PVLDB Reference Format:}\\
\vldbauthors. \vldbtitle. PVLDB, \vldbvolume(\vldbissue): \vldbpages, \vldbyear. \href{https://doi.org/\vldbdoi}{doi:\vldbdoi}
\endgroup
\begingroup
\renewcommand\thefootnote{}\footnote{\noindent
This work is licensed under the Creative Commons BY-NC-ND 4.0 International License. Visit \url{https://creativecommons.org/licenses/by-nc-nd/4.0/} to view a copy of this license. For any use beyond those covered by this license, obtain permission by emailing \href{mailto:info@vldb.org}{info@vldb.org}. Copyright is held by the owner/author(s). Publication rights licensed to the VLDB Endowment. \\
\raggedright Proceedings of the VLDB Endowment, Vol. \vldbvolume, No. \vldbissue\ %
ISSN 2150-8097. \\
\href{https://doi.org/\vldbdoi}{doi:\vldbdoi} \\
}\addtocounter{footnote}{-1}\endgroup
%%% VLDB block end %%%

%%% do not modify the following VLDB block %%
%%% VLDB block start %%%
\ifdefempty{\vldbavailabilityurl}{}{
\vspace{.3cm}
\begingroup\small\noindent\raggedright\textbf{PVLDB Artifact Availability:}\\
The source code, data, and other artifacts have been made available at \url{\vldbavailabilityurl}.
\endgroup
}
%%% VLDB block end %%%

\section{Introduction}

Trajectory similarity computation~\cite{yao2022trajgat,fang2022spatio,cao2021accurate,zhang2021trajectory,chang2023contrastive,yang2021t3s,yang2022tmn,hu2023spatio,zhou2023grlstm,deng2024learning,wang2019effective,cao2024hypergraph,chang2025general} is a fundamental problem in spatio-temporal data analysis, with applications ranging from trajectory retrieval~\cite{chang2024revisiting,yang2024simformer} and clustering~\cite{atev2010clustering,gonzalez1985clustering} to traffic analysis and anomaly detection~\cite{zhang2020continuous}. 
% Given a collection of trajectories and a query trajectory, the task is to retrieve trajectories that are most similar under a predefined distance measure, 
Given two trajectories, the task is to determine their similarity according to such as Dynamic Time Warping (DTW)~\cite{yi1998efficient}, Hausdorff distance~\cite{atev2010clustering}, or Discrete Fréchet Distance (DFD)~\cite{alt1995computing}. While offering high utility, these classical distance measures are computationally expensive, typically having quadratic time complexity in the trajectory length, which is prohibitive for large-scale datasets~\cite{yao2019computing}.

To lower computational costs, trajectory similarity learning has emerged~\cite{hu2025simrn}. The core idea is to represent trajectories as a vectors such that the similarity between trajectories can be approximated by distance computations on fixed-length vectors~\cite{yao2019computing}. A large body of recent studies propose neural network based methods to learn trajectory embeddings. Representative approaches include RNN-based methods that encode trajectories as point sequences, e.g., NeuTraj~\cite{yao2019computing} and its variants, Transformer-based methods that capture long-range dependencies through self-attention, e.g., TrajGAT~\cite{yao2022trajgat} and SIMformer~\cite{yang2024simformer}, and CNN-based methods that emphasize local spatial patterns, e.g., TrjSR~\cite{cao2021accurate} and ConvTraj~\cite{chang2024revisiting}. More recent studies incorporate ranking-aware losses~\cite{deng2024learning,chang2025k}, graph structures~\cite{yao2022trajgat,han2021graph}, or multi-granularity encoders~\cite{yao2022trajgat} to improve approximation accuracy. Despite their architectural diversity, existing learning-based methods share a common goal: approximating complex trajectory distances or their induced similarity rankings using embeddings encoded by neural networks~\cite{hu2025simrn}. However, empirical evidence from our study suggests that this general approach faces several fundamental limitations.

\textbf{C1. Learned trajectory embeddings generally provide no guarantees with respect to the original trajectory distances and are difficult to interpret.} 
Most existing learning-based methods encode trajectories into latent embedding spaces where similarity is measured by a simple vector distance. However, such embedding distances typically provide no explicit guarantees with respect to the original trajectory distance, making it difficult to reason about their semantic consistency. This limitation is particularly relevant for non-metric or alignment-based distances such as DTW~\cite{berndt1994using}, whose values are sensitive to local variations and alignment choices~\cite{cuturi2011fast,su2020survey}. Moreover, neural trajectory embeddings are often opaque, offering little insight into why two trajectories are considered similar or dissimilar, which complicates failure analysis and systematic improvement.

% Most existing methods are trained using regression-style losses~\cite{yao2019computing,yang2024simformer} or ranking-based objectives~\cite{chang2024revisiting,deng2024learning} over sampled trajectory pairs. 
% As a result, the learned embedding distance may approximate the ground-truth distance for some trajectory pairs while failing to preserve the relative similarity ordering for others. This issue is particularly pronounced for non-metric or alignment-based distances such as DTW~\cite{berndt1994using}, where small local variations or alignment choices can significantly affect the true distance~\cite{cuturi2011fast,su2020survey}. 
% As a result, the learned embedding distance can behave inconsistently outside the sampled supervision and provides no explicit guarantee with respect to the original distance measure. This limitation is particularly relevant for non-metric or alignment-based distances such as DTW~\cite{berndt1994using}, whose values are sensitive to local variations and alignment choices~\cite{cuturi2011fast,su2020survey}. Moreover, neural trajectory embeddings are typically opaque~\cite{lipton2018mythos,doshi2017towards}, offering little insight into why two trajectories are considered similar or dissimilar, which makes failure cases hard to analyze and systematic improvements difficult. In practice, this often leads to unstable ranking performance across datasets and distance measures.

\textbf{C2. Many learning-based approaches exhibit a mismatch between their modeling assumptions and the retrieval setting.} 
Trajectory similarity learning is evaluated predominantly according to top-$k$ ranking quality. Yet, most learning-based approaches are trained to minimize distance regression errors~\cite{yao2019computing,yang2024simformer} or pairwise surrogate losses over sampled trajectory pairs~\cite{chang2024revisiting,deng2024learning}. Such objectives emphasize average approximation accuracy rather than explicitly prioritizing ranking-critical near-neighbor trajectories that affect retrieval performance the most. As a result, improvements in training loss do not necessarily translate into better ranking quality in similarity search scenarios.

\textbf{C3. Existing trajectory similarity learning methods are often tightly coupled to specific distance measures and incur substantial training costs.} 
% Trajectory distances such as DTW~\cite{yi1998efficient}, Hausdorff distance~\cite{huttenlocher2002comparing,atev2010clustering}, and DFD~\cite{alt1995computing} capture similarity from fundamentally different perspectives, yet most learning-based approaches require distance-specific supervision~\cite{yang2024simformer,zhang2021trajectory}, or architectural design~\cite{deng2024learning,yao2019computing}. This fragmentation makes it difficult to develop unified representations that generalize across multiple trajectory distances, and it often forces practitioners to redesign objectives, re-tune hyperparameters, and retrain models whenever the target distance or data characteristics change.  Combined with the high cost of training deep neural models, this leads to significant overhead in both data preparation and model training, limiting the practicality and scalability of learning-based solutions~\cite{chang2024trajectory,hu2023spatio}.
Distances such as Hausdorff~\cite{huttenlocher2002comparing,atev2010clustering}, DFD~\cite{alt1995computing}, and DTW~\cite{yi1998efficient} capture similarity from fundamentally different perspectives. Further, existing learning-based methods often rely on distance-specific supervision~\cite{yang2024simformer,zhang2021trajectory} or architectural designs. Consequently, changing the target distance typically requires redesigning objectives, retuning  of hyperparameters, and retraining of models. Combined with the high computational cost of training deep neural networks, this tight coupling to specific distance measures limits representation reuse across distances and reduces the practicality of learning-based solutions in real-world systems.

In this paper, we study trajectory similarity learning adopting a new perspective. Instead of learning embeddings using neural networks that approximate trajectory distances directly, we propose to learn representations whose induced distances form tight, admissible lower bounds of classical trajectory distances (\textbf{C1}). This new approach aligns the representation learning objective more closely with the retrieval task, where preserving similarity ordering is more critical than minimizing absolute distance errors, while still respecting the structural properties of the original distance measures. 
Specifically, we propose a pivot-based lower-bound representation framework, where each trajectory is represented by its distances to a set of pivot points in the spatial domain, computed via simple point-to-trajectory distance primitives (\textbf{C2}). By defining pivots as points rather than trajectories, the framework naturally accommodates both metric and non-metric trajectory distances (\textbf{C3}). With this representation, the similarity between trajectories can be computed directly on their representations, yielding valid lower bounds for Hausdorff distance and DFD, as well as for DTW via one-dimensional projected sequences. Importantly, this design does not modify the original distance definitions and remains fully compatible with standard retrieval pipelines based on vector representations.

% A central challenge in pivot-based representations lies in selecting effective pivots. Poorly chosen pivots may lead to loose bounds and weak ranking performance, negating the benefits of the representation. To address this issue, we propose two data-driven pivot selection algorithms tailored for trajectory similarity learning. The first, \textbf{tightness-aware pivot selection}, explicitly optimizes the expected tightness between the induced lower bound and the ground-truth distance. The second, \textbf{hard-aware pivot selection}, focuses on preserving similarity ordering among hard (near-neighbor) trajectory pairs that are most critical for retrieval accuracy. Both algorithms are designed to operate efficiently on sampled training pairs and can be seamlessly integrated into the pivot-based framework.

A central challenge in pivot-based representations lies in selecting effective pivots, as poorly chosen pivots can result in loose lower bounds and degraded ranking performance. To address this issue, we propose two data-driven pivot selection algorithms tailored for trajectory similarity learning. The first, \textbf{tightness-aware pivot selection}, explicitly optimizes the expected tightness between the induced lower bound and the ground-truth distance. The second, \textbf{hardness-aware pivot selection}, emphasizes preserving similarity ordering among hard (i.e., near-neighbor) trajectory pairs that are most critical for retrieval accuracy. Both algorithms operate efficiently on sampled training pairs and can be integrated  seamlessly into the proposed pivot-based framework.

Experiments on multiple real-world trajectory datasets and across diverse trajectory distance measures offer evidence that the proposed lower-bound representations, when equipped with effective pivot selection strategies, are able to consistently outperform state-of-the-art neural trajectory embeddings in terms of ranking accuracy and stability. These findings suggest that carefully designed lower-bound representations provide an attractive alternative to neural embeddings for trajectory similarity learning, offering improved quality, interpretability, efficiency, and distance-agnostic applicability.

% Through extensive experiments on multiple real-world trajectory datasets and across multiple trajectory distance measures, we show that the proposed lower-bound representations, when equipped with effective pivot selection strategies, consistently outperform state-of-the-art neural trajectory embeddings in terms of ranking accuracy and stability. These results demonstrate that carefully designed lower-bound representations constitute a strong and often superior alternative to complex neural embeddings for trajectory similarity learning, while offering simplicity, interpretability, and distance-agnostic applicability.

The paper's contributions can be summarized as follows:
\begin{itemize}[leftmargin=*]
\item We adopt a lower-bound representation approach to trajectory similarity learning and propose \paramark{LB-TrajRep, a unified pivot-based framework that induces admissible, interpretable lower bounds for multiple classical trajectory distances, including DTW, Hausdorff distance, and DFD.}
% , without modifying the original distance definitions.
\item We propose two data-driven pivot selection algorithms, namely tightness-aware and hardness-aware selection, which explicitly optimize representation quality for similarity ranking and effectively target ranking-critical trajectory pairs, while avoiding the reliance on costly neural models.
\item We report on extensive experiments on multiple real-world trajectory datasets and covering diverse distance measures, finding that the proposed lower-bound representations consistently outperform state-of-the-art neural trajectory embeddings by significant margins in terms of ranking accuracy and stability under a unified retrieval protocol.
\end{itemize}

The remainder of this paper is organized as follows.
Section~\ref{sec:preliminary} introduces preliminaries and defines the problem addressed.
Section~\ref{sec:methodology} presents the proposed lower-bound representation framework and the corresponding pivot selection strategies for trajectory similarity retrieval.
Section~\ref{sec:experiments} reports on extensive experimental evaluations on real-world trajectory datasets, including comparisons with state-of-the-art neural embedding methods.
Section~\ref{sec:related_works} reviews related work, and Section~\ref{sec:conclusion} concludes and proposes future research directions.

\section{Preliminaries}
\label{sec:preliminary}

We introduce concepts that are important in the
context of trajectory similarity learning and then define the problem addressed.
{\color{black}{Table~\ref{tab:notation} summarizes key notation used throughout the paper.}}
\begin{myDef}[Trajectory]
A trajectory is an ordered sequence of spatial points $T = \langle x_1, x_2, \dots, x_{|T|} \rangle$, where each point $x_i \in \mathbb{R}^d$ represents a location in a $d$-dimensional space.
We focus on spatial trajectories with $d=2$, corresponding to latitude and longitude coordinates.
The length of a trajectory $T$ is denoted by $|T|$.
\end{myDef}
% \begin{myDef}[Trajectory Dataset]
% Let $\mathcal{D} = \{T_1, T_2, \dots, T_N\}$ denote a dataset of trajectories.
% We use $\mathcal{Q} \subseteq \mathcal{D}$ to denote a set of query trajectories.
% \end{myDef}
Given trajectories as the basic data objects, trajectory similarity is typically quantified using distance measures that capture different aspects of spatial and sequential similarity.
We consider three classical trajectory distance measures that capture different notions of similarity.
% In this work, we consider three classical trajectory distance measures that capture different notions of similarity.

\begin{table}[t]
{\color{black}{
\centering
\caption{\color{black}{Summary of notation.}}
\vspace{-0.3cm}
\label{tab:notation}
\small
\begin{tabular}{ll}
\toprule
\textbf{Notation} & \textbf{Description} \\
\midrule
$T,S,Q$ & Trajectories; $Q$ denotes a query trajectory \\
$D(\cdot,\cdot)$ & Ground-truth trajectory distance \\
$LB(\cdot,\cdot)$ & Lower bound of $D(\cdot,\cdot)$ \\
$C$ & Set of lower-bound components \\
$LB_C(T,S)$ & Lower bound induced by component set $C$ \\
$\phi_c(T)$ & Response of trajectory $T$ to component $c$ \\
$\Phi_C(T)$ & Lower-bound representation of trajectory $T$ \\
$p$ & Spatial pivot point \\
$P$ & Selected pivot set \\
% $\phi_p(T)$ & Pivot response of trajectory $T$ to pivot $p$ \\
$k$ & Pivot budget ($|P|=k$) \\
$\mathcal{T}_{\text{train}}$ & Training trajectory set \\
$\mathcal{S}$ & Sampled trajectory pairs \\
% $\Delta(p\mid P)$ & Marginal gain of adding pivot $p$ to pivot set $P$ \\
\bottomrule
\end{tabular}
}}
\end{table}

\begin{myDef}[Hausdorff Distance~\cite{huttenlocher2002comparing,atev2010clustering}]
Given two trajectories $T$ and $S$, their Hausdorff distance is defined as follows:
\begin{equation}
D_H(T,S) = \max \left(
\max_{x \in T} \min_{y \in S} \|x - y\|,
\;
\max_{y \in S} \min_{x \in T} \|x - y\|
\right),
\end{equation}
where $\|\cdot\|$ denotes Euclidean distance.
Hausdorff distance is a metric and measures the maximum spatial deviation between two trajectories.
\end{myDef}

% While Hausdorff distance focuses on spatial coverage, other distances additionally consider the ordering of points along trajectories.

\begin{myDef}[Discrete Fr\'echet Distance~\cite{alt1995computing}]
The discrete Fr\'echet distance (DFD) between two trajectories $T$ and $S$ measures their similarity while respecting the ordering of the points in each trajectory.
% It is defined as the minimum leash length required to traverse both trajectories from start to end while preserving point order.
It is defined as the minimum distance between pairs of trajectory points required to traverse the trajectories from start to end.
DFD can be computed using dynamic programming and satisfies the properties of a metric.
\end{myDef}

% In contrast to the above metric distances, Dynamic Time Warping allows flexible alignments between trajectories.

\begin{myDef}[Dynamic Time Warping Distance~\cite{yi1998efficient}]
Given two trajectories $T = \langle x_1,\dots,x_n\rangle$ and
$S = \langle y_1,\dots,y_m\rangle$, the Dynamic Time Warping (DTW) distance is defined as follows:
\begin{equation}
D_{\mathrm{DTW}}(T,S) = \min_{\pi} \sum_{(i,j)\in \pi} \|x_i - y_j\|,
\end{equation}
where $\pi$ denotes a warping path that aligns points from $T$ and $S$ while preserving their order.
DTW can be computed in $O(nm)$ time using dynamic programming.
Unlike Hausdorff distance and DFD, DTW does not satisfy the triangle inequality and is not a metric.
\end{myDef}

Given a trajectory distance measure, similarity search is typically conducted by ranking trajectories according to their distances to a query trajectory.

% \begin{myDef}[Trajectory Similarity Ranking]
% Given a query trajectory $Q$ and a set of trajectories $\mathcal{D}$, the similarity ranking induced by a distance measure $D(\cdot,\cdot)$ is defined as an ordering of trajectories in $\mathcal{D}$ sorted by increasing distance to $Q$, denoted as
% $\mathrm{rank}_D(Q, \mathcal{D}).$
% \end{myDef}

\begin{myDef}[Trajectory Similarity Ranking]
\label{def:trajectory_similarity_ranking}
Given a query trajectory $Q$ and a trajectory dataset $\mathcal{D}$, the ranking induced by a distance measure $D(\cdot,\cdot)$ is the ordering of trajectories in $\mathcal{D}$ by increasing distance to $Q$, denoted as $\text{rank}_D(Q, \mathcal{D})$.
\end{myDef}

% Trajectory similarity learning aims to approximate such rankings using efficient representations.

% \begin{myDef}[Trajectory Similarity Learning~\cite{chang2024revisiting}]
% Given a trajectory distance measure $D(\cdot,\cdot)$, trajectory similarity representation learning aims to construct a representation function $f(\cdot)$ such that the distance computed on representations reflects the similarity between trajectories under $D$.
% Formally, for any trajectories $T_i$ and $T_j$, the representation distance $d(f(T_i), f(T_j))$ is expected to correlate with the original distance $D(T_i, T_j)$.

% Moreover, to support similarity retrieval, the representation should preserve the relative ordering induced by $D$.
% That is, for any three trajectories $T_i$, $T_j$, and $T_k$, if
% \[
% D(T_i, T_j) < D(T_i, T_k),
% \]
% then it is desirable that
% \[
% d(f(T_i), f(T_j)) < d(f(T_i), f(T_k)).
% \]
% \end{myDef}

\begin{myDef}[Trajectory Similarity Learning~\cite{chang2024revisiting}] 
Given a trajectory similarity measure $D(\cdot,\cdot)$, trajectory similarity representation learning aims to construct a representation function $f(\cdot)$ such that the distance computed on representations reflects the similarity between the original trajectories under $D$.
Formally, for any trajectories $T_i$ and $T_j$, the distance $d(f(T_i), f(T_j))$ is expected to correlate with the original distance $D(T_i, T_j)$, e.g., $d(f(T_i), f(T_j))\approx D(T_i, T_j)$, where $d(\cdot, \cdot)$ is a vector similarity measure.
Moreover, to support similarity retrieval, the representation should preserve the similarity ranking induced by $D$: for all query trajectory $Q$ and a trajectory dataset $\mathcal{D}$, the ranking induced by distances computed on representations should approximate the ground-truth ranking $\mathrm{rank}_D(Q, \mathcal{D})$ (see Definition~\ref{def:trajectory_similarity_ranking}).
\end{myDef}

Although many existing methods emphasize approximating the original distance values, in practice, \paramark{trajectory similarity learning is predominantly evaluated using top-$k$ ranking quality metrics such as Hit Ratio (HR) and Recall}~\cite{yao2019computing,yao2022trajgat}. These metrics focus on whether the nearest neighbors under the original distance are retrieved correctly, rather than on the exact numerical accuracy of distance approximation. As a result, preserving similarity ranking is often emphasized over minimizing absolute distance errors.

% \begin{myDef}[Trajectory Similarity Learning]
% Given a trajectory distance measure $D(\cdot,\cdot)$, trajectory similarity learning seeks to learn a representation function $f(\cdot)$ such that the ranking induced by a simple distance computed on representations $f(T)$ approximates the ground-truth ranking $\mathrm{rank}_D(Q, \mathcal{D})$.
% \end{myDef}

% Preserving similarity ranking as defined above requires representations that reflect the structure of the original trajectory distance.
% One principled way to achieve this is to exploit distance bounds that are guaranteed with respect to the ground-truth distance.
% In particular, lower bounds provide a natural mechanism to approximate similarity ordering while maintaining theoretical consistency with the original distance measure.

% Existing trajectory similarity learning methods typically address the above objective by learning representations whose pairwise distances approximate the original trajectory distance through data-driven optimization.
% Such approaches rely on regression or ranking-based losses and do not provide explicit guarantees with respect to the ground-truth distance.
% In contrast, we adopt a different perspective and consider representations derived from distance bounds.
% Lower bounds offer a principled mechanism to relate representation distances to the original trajectory distance, making them a natural foundation for constructing similarity-preserving representations.

Existing trajectory similarity learning methods address this objective by learning neural embedding representations whose pairwise distances are trained to approximate the original trajectory distance or its induced ranking~\cite{chang2024revisiting,yang2024simformer,chang2024trajectory}. These methods rely on complex neural models and optimization objectives, and they typically do not provide explicit guarantees with respect to the ground-truth distance. In contrast, we adopt a different perspective and consider representations derived from distance bounds. Lower bounds offer a principled mechanism to relate representation distances to the original trajectory distance, making them a natural foundation for constructing similarity-preserving representations with theoretical consistency guarantees.

% Lower bounds play an important role in trajectory similarity search and ranking by providing guaranteed distance approximations.

\begin{myDef}[Lower Bound of a Trajectory Distance]
Given a trajectory distance measure $D(\cdot,\cdot)$, a function $LB(\cdot,\cdot)$ is called a lower bound of $D$ if it satisfies $ \forall\ T,S, LB(T,S) \le D(T,S)$.
\end{myDef}

The effectiveness of a lower bound for ranking preservation depends on how closely it approximates the true distance.

\begin{myDef}[Lower-Bound Tightness]
\label{def:tightness}
The tightness of a lower bound $LB(\cdot,\cdot)$ with respect to a distance $D(\cdot,\cdot)$ denotes how well $LB(T,S)$ approximates $D(T,S)$.
Tighter lower bounds generally preserve similarity rankings better.
\end{myDef}

% We are now ready to formally define the problem addressed in this work.
While lower-bound tightness characterizes how closely a lower bound approximates the original distance, it is not an end goal in itself.
Rather tightness affects the quality of similarity rankings induced by lower-bound–based representations.
Motivated by this observation, we formulate the problem of trajectory similarity learning as constructing representations that leverage lower bounds to preserve the ground-truth similarity ranking.

% \begin{myDef}[Problem Definition]
% Given a trajectory dataset $\mathcal{D}$, a set of query trajectories $\mathcal{Q}$, and a trajectory distance measure $D(\cdot,\cdot)$, the problem is to construct a representation for each trajectory such that the similarity ranking induced by a lower bound computed on the representations closely approximates the ground-truth ranking $\mathrm{rank}_D(Q, \mathcal{D})$ for any query $Q \in \mathcal{Q}$.
% \end{myDef}

% \begin{myDef}[Problem Definition]
% Given a trajectory dataset $\mathcal{D}$, a set of query trajectories $\mathcal{Q}$, and a trajectory distance measure $D(\cdot,\cdot)$, 
% the goal of this work is to construct a representation for each trajectory such that similarity between trajectories can be computed efficiently using a simple distance function on the representations.
% At the same time, the induced similarity ranking should closely approximate the ground-truth ranking $\mathrm{rank}_D(Q, \mathcal{D})$ for any query $Q \in \mathcal{Q}$.
% In particular, we focus on representations derived from lower bounds of $D$, which provide theoretical consistency with the original trajectory distance while enabling fast similarity computation in a retrieval-oriented setting.
% \end{myDef}

\begin{myDef}[Problem Definition]
Given a trajectory dataset $\mathcal{D}$, a query set $\mathcal{Q}$, and a trajectory distance measure $D(\cdot,\cdot)$, the goal is to construct a representation function $f(\cdot)$ that enables efficient similarity computation using simple vector distances, while approximately preserving the ground-truth similarity ranking $\text{rank}_D(Q, \mathcal{D})$ for all $Q \in \mathcal{Q}$. In particular, we focus on representations derived from lower bounds of $D$, which makes it possible to provide theoretical guarantees on the consistency with the original distance in a retrieval-oriented setting.
\end{myDef}

\section{Methodology}
\label{sec:methodology}

{\color{black}{We proceed to present LB-TrajRep, a unified framework for trajectory similarity learning based on lower-bound representations, as shown in Figure~\ref{fig:framework}.}}
Rather than learning embeddings using neural network based methods, our approach constructs representations from lower-bound components, thereby enabling theoretical guarantees with respect to the original trajectory distance.
Specifically, a trajectory is represented using a small set of lower-bound components, enabling efficient similarity computation while effectively preserving ground-truth similarity rankings.
We first introduce the pivot-based lower-bound representation framework, then describe different component instantiations, and finally present data-driven pivot selection strategies for improving ranking quality.

\subsection{Lower-Bound Representation Framework}
\label{sec:Lower-Bound-Representation-Framework}

% We begin by introducing a unified lower-bound representation framework for trajectory similarity learning.
The key idea underlying the proposed representation framework is to represent each trajectory using a set of lower-bound components derived from simple geometric primitives, so that the similarity between trajectories can be estimated efficiently while maintaining theoretical consistency with the original trajectory distance.

Let $\mathcal{C}$ denote a set of lower-bound components.
Each component $c \in \mathcal{C}$ induces a valid lower bound $LB_c(\cdot,\cdot)$ of a given trajectory distance $D(\cdot,\cdot)$.
Given two trajectories $T$ and $S$, the overall lower bound induced by $\mathcal{C}$ is defined as follows:
\begin{equation}
LB_{\mathcal{C}}(T,S) = \max_{c \in \mathcal{C}} LB_c(T,S)
\end{equation}
By construction, $LB_{\mathcal{C}}(T,S)$ is also a valid lower bound of $D(T,S)$.

This approach associates a representation with each trajectory that consists of its responses to the components in $\mathcal{C}$.
Specifically, for each component $c$, we define a scalar function $\phi_c(T)$ such that:
\begin{equation}
LB_c(T,S) = |\phi_c(T) - \phi_c(S)|
\end{equation}
This formulation applies to lower-bound components induced by scalar projections, which cover all instantiations considered in this work.
As a result, a trajectory $T$ can be represented as a vector:
\begin{equation}
\Phi_{\mathcal{C}}(T) = \bigl[\phi_{c_1}(T), \phi_{c_2}(T), \dots, \phi_{c_{|\mathcal{C}|}}(T)\bigr],
\end{equation}
where $\mathcal{C}=\{c_1,\dots,c_{|\mathcal{C}|}\}$.
Similarity between trajectories can then be computed directly on these representations using simple distance functions, such as the $\ell_\infty$ distance, which corresponds to the maximum over component-wise differences.

Different instantiations of the framework correspond to different choices of the component set $\mathcal{C}$.
Next, we describe concrete lower-bound components and their instantiations, and we study how to select effective components to improve similarity ranking quality.

\subsection{Lower-Bound Components and Instantiations}

Under the above lower-bound representation framework, the quality and behavior of a representation are determined by the choice of lower-bound components.
We introduce two fundamental types of lower-bound components and show how combinations of these components lead to different instantiations of the framework.

% \textbf{Endpoint-Based Components.}
% A simple yet effective source of lower bounds arises from the start and end points of trajectories.
% Let $T = \langle x_1, \dots, x_{|T|} \rangle$ and $S = \langle y_1, \dots, y_{|S|} \rangle$ be two trajectories.
% For many trajectory distances that respect point-wise alignment or traversal order, the distances between corresponding endpoints provide valid lower bounds.
% Specifically, we define the endpoint-based lower bound as
% \begin{equation}
% \label{eq:LB_SE}
% LB_{\mathrm{SE}}(T,S) = \max \left\{ \|x_1 - y_1\|,\; \|x_{|T|} - y_{|S|}\| \right\}.
% \end{equation}
% This component is inexpensive to compute and captures coarse spatial alignment between trajectories.
% It serves as a strong baseline and is commonly used in classical trajectory similarity search~\cite{Keogh2002ExactIO}.

\begin{figure}[t]
    \centering
    \includegraphics[width=\columnwidth]{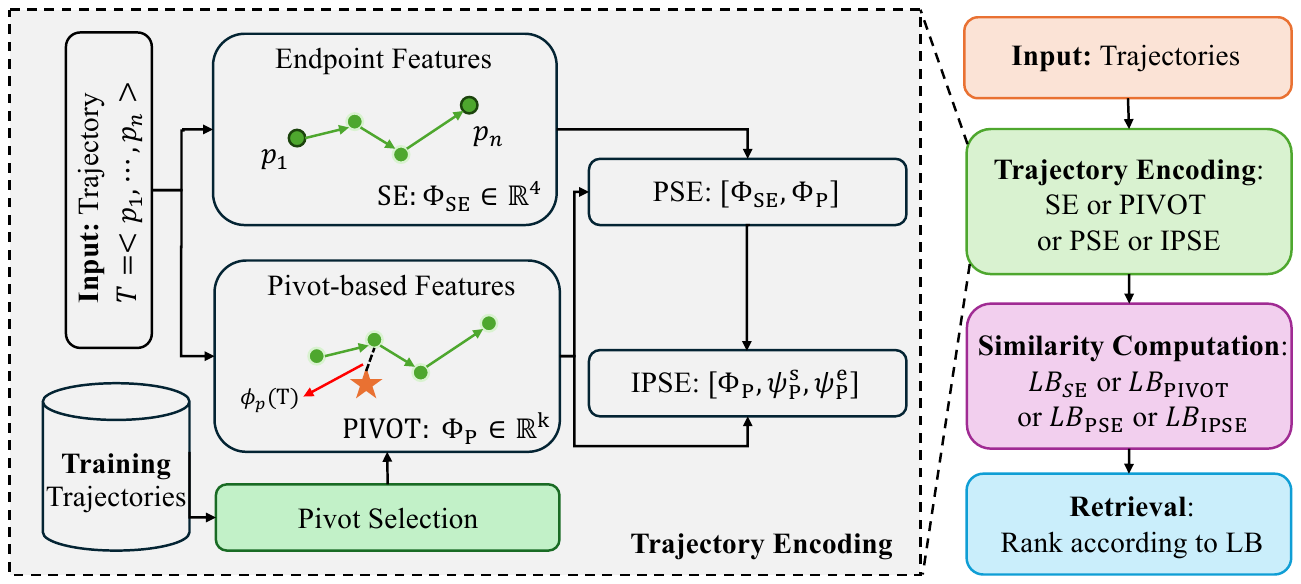}
    \vspace{-0.7cm}
    % \caption{The proposed framework.}
    \caption{\color{black}{LB-TrajRep framework overview.}}
    \label{fig:framework}
    \vspace{-0.5cm}
\end{figure}

\textbf{Endpoint-Based Components.}
A simple yet effective source of lower bounds arises from the start and end points of trajectories.
% Let $T = \langle x_1, \dots, x_{|T|} \rangle$ and $S = \langle y_1, \dots, y_{|S|} \rangle$ be two trajectories.
For many trajectory distances that enforce point-wise alignment or respect traversal order, the distances between corresponding endpoints provide valid lower bounds.
Specifically, given trajectories $T = \langle x_1, \dots, x_{|T|} \rangle$ and $S = \langle y_1, \dots, y_{|S|} \rangle$, we define the endpoint-based lower bound:
\begin{equation}
\label{eq:LB_SE}
LB_{\mathrm{SE}}(T,S) = \max \left( \|x_1 - y_1\|,\; \|x_{|T|} - y_{|S|}\| \right)
\end{equation}
This component is inexpensive to compute and captures coarse spatial alignment between trajectories.
It serves as a strong baseline and is commonly used in classical trajectory similarity search~\cite{Keogh2002ExactIO}.
However, this endpoint-based component provides a valid lower bound for alignment-based distances such as DTW and DFD, but not for Hausdorff distance.
% , as the maximum deviation under Hausdorff distance may occur at interior points rather than at trajectory endpoints.

\textbf{Pivot-Based Components.}
% Beyond endpoints, we consider lower-bound components derived from a set of pivot points in the spatial domain.
% Importantly, pivots are defined as points rather than trajectories, which allows the resulting components to remain applicable to both metric and non-metric trajectory distances.
Next, we consider lower-bound components derived from a set of pivot points in the spatial domain. 
% While pivot-based lower bounds have been studied in prior work, existing approaches primarily focus on metric spaces and are designed for distance pruning or index-based acceleration.
Pivot-based lower bounds have been considered for metric indexing and distance pruning~\cite{Yianilos1993DataSA,Chen2020IndexingMS,Chvez2001SearchingIM}, where approaches are typically designed for metric distances and rely on trajectory-level pivots~\cite{Zheng2021REPOSEDT}.
% In contrast, we adopt pivot-based components as a representation learning mechanism, with pivots defined as points rather than trajectories, which enables applicability to both metric and non-metric trajectory distances such as DTW. 
In contrast, we consider pivot-based lower bounds from a trajectory similarity learning perspective and redefine pivots as points rather than trajectories. This seemingly simple design choice fundamentally differs from prior pivot-based approaches and allows the resulting components to remain valid for both metric and non-metric trajectory distances, including DTW (as covered in Section~\ref{sec:pivot_based_lower_bounds_for_metric_and_nonmetric_distances}). 
In addition, defining pivots as points rather than trajectories substantially simplifies representation construction. At query time, encoding a trajectory only requires computing point-to-trajectory distances, avoiding the need for expensive trajectory-to-trajectory distance computations that are commonly required by trajectory-level pivots.

Given a pivot point $p \in \mathbb{R}^2$ and a trajectory $T$, we define a pivot response function:
\begin{equation}
\phi_p(T) = \min_{x \in T} \|x - p\|
\end{equation}
Intuitively, $\phi_p(T)$ measures the minimum distance from trajectory $T$ to the pivot point $p$.
For two trajectories $T$ and $S$, the corresponding pivot-based lower bound is defined as follows:
\begin{equation}
LB_p(T,S) = |\phi_p(T) - \phi_p(S)|
\end{equation}
Each pivot thus induces a valid lower-bound component that captures the relative proximity between trajectories with respect to a reference location in space. The effectiveness of pivot-based components depends on the choice of pivots, which motivates the data-driven pivot selection strategies presented in Section~\ref{sec:pivot_selection_for_lower_bound_representations}.

% Different instantiations of the lower-bound representation framework correspond to different choices and combinations of the above components. We consider four representative instantiations.
{\color{black}{\textbf{Framework Instantiations.} The lower-bound representation framework introduced in Section~\ref{sec:Lower-Bound-Representation-Framework} is generic and is open to different choices of the component set $\mathcal{C}$.
Using the endpoint-based component $LB_{SE}$ and the pivot-based components $\{LB_p | p \in P\}$, we instantiate the framework in four different ways, corresponding to different choices of $\mathcal{C}$ and different ways of constructing trajectory representations. Table~\ref{tab:inst} summarizes the four instantiations.}}

% \textbf{SE:} This instantiation uses only the endpoint-based component $LB_{\mathrm{SE}}$.
% It serves as a lightweight baseline that relies solely on trajectory endpoints.

% \underline{\textsf{SE}}: This instantiation uses only the endpoint-based component
% $LB_{SE}$. Each trajectory is represented by a $4$-dimensional vector consisting of the coordinates of its start and end points. The similarity between trajectories is computed by applying the representation distance corresponding to Eq.~\ref{eq:LB_SE}. This instantiation serves as an extremely compact and efficient baseline.

{\color{black}{\underline{\textsf{SE}}:
This instantiation uses only the endpoint-based lower-bound component $LB_{\mathrm{SE}}$ defined in Eq.~\ref{eq:LB_SE}.
This corresponds to the component set containing only the start--end component.
Each trajectory is represented by a 4-dimensional vector consisting of the coordinates of its start and end points.
}}
% The similarity between trajectories is computed by the endpoint-based lower bound $LB_{\mathrm{SE}}(T,S)$.
This instantiation serves as an extremely compact and efficient baseline.

% \begin{figure}[t]
%     \centering
%     \includegraphics[width=\columnwidth]{figs/framework.pdf}
%     % \vspace{-0.7cm}
%     \caption{The proposed framework.}
%     \label{fig:framework}
%     % \vspace{-0.5cm}
% \end{figure}

\begin{table}[t]
\centering
{\color{black}{
\caption{\color{black}{Framework instantiations.}}
\vspace{-0.3cm}
\label{tab:inst}
\small
\begin{tabular}{lll}
\toprule
Method & Component Set $\mathcal{C}$ & Representation \\
\midrule
SE
&
$\{LB_{SE}\}$
&
Endpoints
\\

PIVOT
&
$\{LB_p\}_{p\in P}$
&
Pivot responses
\\

PSE
&
$\{LB_{SE}\}\cup\{LB_p\}_{p\in P}$
&
Endpoints \& pivot responses
\\

IPSE
&
$\{LB_p\}_{p\in P}$
&
Pivot responses \& endpoint-to-pivot
\\
\bottomrule
\end{tabular}
}}
\vspace{-0.3cm}
\end{table}

% , yielding a valid lower bound for alignment-based distances

% \textbf{PIVOT:} This instantiation uses a set of pivot-based components $\{LB_p \mid p \in P\}$ induced by a pivot set $P$, without incorporating endpoint information.

% \underline{\textsf{PIVOT}}: This instantiation uses a set of pivot-based components $\{LB_p \mid p \in P\}$ induced by a pivot set $P$, without incorporating endpoint information. Each trajectory $T$ is represented by the pivot-response vector $\phi_P(T) = (\phi_{p_1}(T), \ldots, \phi_{p_k}(T))$, where $P=\{p_1,\ldots,p_k\}$ and $\phi_p(T)=\min_{x\in T}\|x-p\|$. The similarity between trajectories is computed directly on these
% representations using a simple distance function, i.e., $\ell_\infty$.

{\color{black}{
\underline{\textsf{PIVOT}}: This instantiation uses only the pivot-based components induced by a pivot set $P$.
The induced lower bound is defined as follows:
\begin{equation}
LB_{\mathrm{PIVOT}}(T,S)
=
\max_{p\in P} LB_p(T,S)
=
\max_{p\in P}|\phi_p(T)-\phi_p(S)|.
\label{eq:pivot}
\end{equation}
Each trajectory $T$ is represented by the pivot-response vector
$\phi_P(T)=(\phi_{p_1}(T),\ldots,\phi_{p_k}(T))$, where
$P=\{p_1,\ldots,p_k\}$ and
$\phi_p(T)=\min_{x\in T}$ $\|x-p\|$.
Trajectory similarity is computed directly on these representations using the $\ell_\infty$ distance.
}}

\underline{\textsf{PSE}}: This instantiation combines pivot-based components with the endpoint-based component.
The induced lower bound is defined as follows:
{\color{black}{
\begin{equation}
LB_{\mathrm{PSE}}(T,S) = \max \left( LB_{\mathrm{SE}}(T,S),\; LB_{\mathrm{PIVOT}}(T,S) \right)
\end{equation}
}}
% This explicit combination preserves the admissibility of the lower bound while leveraging complementary information from endpoints and pivots.
This combination integrates complementary information from endpoints and pivots and can be applied whenever the underlying components constitute valid lower bounds on the target distance.

{\color{black}{\underline{\textsf{IPSE}}:
Unlike PSE, which combines lower-bound values via a maximum operator, IPSE incorporates endpoint information at the representation level.
Specifically, endpoint information is encoded by the distances from the start and end points to the selected pivots, and these features are concatenated with the pivot-response vector.
Let $P=\{p_1,\ldots,p_k\}$ be the selected pivot set and let $T=\langle x_1,\ldots,x_{|T|}\rangle$.
We reuse the pivot-response vector $\phi_P(T)$ defined above and define the endpoint-to-pivot distance vectors as follows:
}}
\begin{equation}
\begin{aligned}
\psi^{\mathrm{s}}_P(T)&=\bigl[\|x_1-p_1\|,\ldots,\|x_1-p_k\|\bigr] \\
\psi^{\mathrm{e}}_P(T)&=\bigl[\|x_{|T|}-p_1\|,\ldots,\|x_{|T|}-p_k\|\bigr]
\end{aligned}
\end{equation}
The IPSE representation is the concatenation of the above three vectors:
\begin{equation}
\Phi_{\mathrm{IPSE}}(T)=\bigl[\phi_P(T)\ \Vert\ \psi^{\mathrm{s}}_P(T)\ \Vert\ \psi^{\mathrm{e}}_P(T)\bigr]
\end{equation}
% The similarity between trajectories is then computed uniformly on $\Phi_{\mathrm{IPSE}}(\cdot)$ using a simple distance function.
% Unlike PSE, which combines endpoint- and pivot-based lower bounds via a maximum operator, IPSE integrates endpoint information directly into the representation, yielding a unified vector form.
{\color{black}{
Trajectory similarity is then computed directly on $\Phi_{\mathrm{IPSE}}(\cdot)$ using the $\ell_\infty$ distance, i.e., $LB_{\mathrm{IPSE}}(T,S) = \|\Phi_{\mathrm{IPSE}}(T)-\Phi_{\mathrm{IPSE}}(S)\|_\infty$. For DTW and DFD, this distance also constitutes a valid lower bound, and it is also compatible with the representation-based paradigm commonly adopted in trajectory similarity learning.
}}

\textbf{Discussion.} All instantiations described above are derived from the same lower-bound representation framework and differ only in the choice and combination of lower-bound components.
As a result, they share the same theoretical guarantees with respect to the original trajectory distance, subject to the validity of the underlying lower-bound components.
% As a result, they share the same theoretical guarantees with respect to the original trajectory distance, while exhibiting different trade-offs between representation compactness and ranking effectiveness.
Moreover, an important property of defining pivots as points is that the resulting trajectory representation depends only on the selected pivot set, not on the specific trajectory distance.
Once pivot points are fixed, the same representation vector can be reused to induce valid lower bounds for different trajectory distances, including Hausdorff distance, DFD, and DTW.
Switching the target distance only affects how similarity is interpreted on the representation, without requiring re-encoding of trajectories.

\subsection{Pivot Selection for Lower-Bound Representations}
\label{sec:pivot_selection_for_lower_bound_representations}

The effectiveness of the proposed pivot-based representations depends on the choice of pivots, and different pivot sets can induce substantially different lower bounds, which in turn affect the quality of the resulting similarity rankings. 
Existing pivot selection strategies are primarily developed for metric indexing and lower-bound--based similarity search, where pivots are selected from the same domain as the indexed objects and are often guided by global geometric criteria such as spatial coverage or dispersion~\cite{Yianilos1993DataSA,Chvez2001SearchingIM}. 

In our setting, pivots are instantiated as spatial points while data objects are trajectories, leading to a heterogeneous pivot--data relationship. In this setting, global coverage alone does not fully capture how a pivot contributes to representation quality, as the effectiveness of a pivot also depends on how tightly it bounds distances between similar trajectories when used as a representation component. This observation motivates the development of pivot selection strategies that explicitly account for lower-bound behavior in the representation space.
Before considering pivot selection, we clarify why the proposed point-based pivots induce valid lower bounds for both metric and non-metric trajectory distances.

\subsubsection{Pivot-Based Lower Bounds for Metric and Non-Metric Distances}
\label{sec:pivot_based_lower_bounds_for_metric_and_nonmetric_distances}

We first clarify the theoretical basis of pivot-based lower bounds and explain why defining pivots as points, rather than trajectories, makes it possible to support non-metric trajectory distances such as DTW.

\textbf{Pivot-based lower bounds in metric spaces.}
Pivot-based lower bounds are a classical technique for metric distances~\cite{Chen2020IndexingMS,chen2015efficient}. Let $D(\cdot,\cdot)$ be a metric trajectory distance and let $P$ be a set of pivots drawn from the same domain as the data objects. For any pivot $P_i$ and any two trajectories $T$ and $S$, the reverse triangle inequality implies
\begin{equation}
|D(T,P_i) - D(S,P_i)| \le D(T,S),
\end{equation}
which yields a valid lower bound. Most existing pivot-based methods implicitly rely on this property and therefore select pivots that are homogeneous with the data objects, i.e., trajectories themselves. This formulation is applicable to metric distances such as Hausdorff distance and DFD.
% Discrete Fr\'echet Distance.

\textbf{Limitations for non-metric distances.}
For non-metric distances such as DTW, the triangle inequality does not hold. As a result, trajectory-level pivoting does not, in general, induce valid lower bounds. This limitation prevents direct application of existing pivot-based techniques to DTW and other non-metric trajectory distances.

\textbf{Point pivots and non-metric lower bounds.}
To overcome this limitation, we define pivots as spatial points rather than trajectories. Given a pivot point $p\in\mathbb{R}^2$, the pivot response $\phi_p(T)=\min_{x\in T}\|x-p\|$ induces a scalar projection of a trajectory. Importantly, this projection does not rely on any metric property of the trajectory distance itself.

% \begin{lemma}
% For any pivot point $p$ and any two points $x,y\in\mathbb{R}^d$, the mapping $g_p(z)=\|z-p\|$ is 1-Lipschitz, i.e.,
% \begin{equation}
% |g_p(x)-g_p(y)| \le \|x-y\|.
% \end{equation}
% \end{lemma}

\begin{lemma}[Lipschitz Projection Induced by Point Pivots]
\label{lem:lipschitz}
For any pivot point $p \in \mathbb{R}^2$ and any two points $x, y \in \mathbb{R}^2$, the mapping $g_p(z) = \|z - p\|$ is 1-Lipschitz, i.e., $|g_p(x) - g_p(y)| \le \|x - y\|$.
% \begin{equation}
% |g_p(x) - g_p(y)| \le \|x - y\|.
% \end{equation}
\end{lemma}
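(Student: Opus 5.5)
The plan is to derive the claim directly from the triangle inequality for the Euclidean norm on $\mathbb{R}^2$. The two orderings of $g_p(x)$ and $g_p(y)$ are handled symmetrically, so that a single pair of one-sided inequalities yields the absolute-value bound. No property of any trajectory distance is needed; only the fact that $\|\cdot\|$ is a norm is used, so the argument in fact holds verbatim in $\mathbb{R}^d$ for any $d$.

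First, write $x - p = (x - y) + (y - p)$ and apply the triangle inequality for $\|\cdot\|$ to obtain
\begin{equation*}
\|x - p\| \le \|x - y\| + \|y - p\|, \quad\text{i.e.,}\quad g_p(x) - g_p(y) \le \|x - y\|.
\end{equation*}
Second, exchange the roles of $x$ and $y$. This gives $g_p(y) - g_p(x) \le \|y - x\| = \|x - y\|$, using the symmetry $\|{-v}\| = \|v\|$ of the norm. Combining the two one-sided inequalities gives $|g_p(x) - g_p(y)| \le \|x - y\|$, which is the reverse triangle inequality and exactly the 1-Lipschitz property claimed in Lemma~\ref{lem:lipschitz}.

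There is no genuine obstacle here. The only point requiring care is to state explicitly that both one-sided bounds are needed to control the absolute value. It is also worth remarking that the bound is tight: equality holds when $p$, $y$, and $x$ are collinear with $y$ between $p$ and $x$. This shows the constant $1$ cannot be improved, which is relevant later when the lemma is lifted through the $\min$ in $\phi_p$ and through warping paths to obtain valid lower bounds for Hausdorff distance, DFD, and DTW.
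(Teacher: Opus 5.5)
Your proof is correct and is essentially the paper's argument. The paper states this lemma without a separate proof, but the appendix proof of Lemma~\ref{lem:pointpivot_lb} uses exactly your step $\|x-p\|\le\|x-y\|+\|y-p\|$, followed by the symmetric bound obtained by swapping the roles of the two points.
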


\begin{lemma}[Trajectory-level lower bound induced by point pivots]
\label{lem:pointpivot_lb}
Let $p\in\mathbb{R}^2$ be a pivot point and define the pivot response of a trajectory $T$ as $\phi_p(T)=\min_{x\in T}\|x-p\|=\min_{x\in T} g_p(x)$.
Consider any trajectory distance $D(\cdot,\cdot)$ that \emph{dominates pointwise distances} in the sense that
for any two trajectories $T,S$:
\begin{equation}
\label{eq:dominates_pointwise_distances}
\forall x\in T:\ \min_{y\in S}\|x-y\|\le D(T,S),
\ \ \
\forall y\in S:\ \min_{x\in T}\|x-y\|\le D(T,S)
\end{equation}
Then for any trajectories $T$ and $S$, the following lower bound holds:
\begin{equation}
\label{eq:holds_lower_bound}
|\phi_p(T)-\phi_p(S)| \le D(T,S)
\end{equation}
\end{lemma}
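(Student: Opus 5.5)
By symmetry in $T$ and $S$, it suffices to show the one-sided inequality $\phi_p(T) - \phi_p(S) \le D(T,S)$. Exchanging the roles of $T$ and $S$ uses the second half of the domination hypothesis~\eqref{eq:dominates_pointwise_distances} and gives $\phi_p(S)-\phi_p(T)\le D(T,S)$. Together the two inequalities yield~\eqref{eq:holds_lower_bound}. Trajectories are finite point sequences, so every minimum below is attained.

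For the one-sided bound, let $y^\ast\in S$ be a minimizer of $g_p$ over $S$, so that $\phi_p(S)=\|y^\ast-p\|$. Apply the second condition of~\eqref{eq:dominates_pointwise_distances} to $y^\ast$. This gives a point $x^\ast\in T$ with $\|x^\ast-y^\ast\| = \min_{x\in T}\|x-y^\ast\| \le D(T,S)$. Since $\phi_p(T)$ is a minimum over $T$, we have $\phi_p(T)\le g_p(x^\ast)$. By Lemma~\ref{lem:lipschitz}, $g_p(x^\ast)\le g_p(y^\ast)+\|x^\ast-y^\ast\|$. Chaining these gives
\begin{equation*}
\phi_p(T)\le g_p(x^\ast)\le \phi_p(S)+\|x^\ast-y^\ast\|\le \phi_p(S)+D(T,S),
\end{equation*}
which is the desired one-sided inequality.

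The only delicate point is choosing the right direction of the domination hypothesis for each side. To upper-bound $\phi_p(T)$, we must start from the minimizer in $S$ and match it to some point in $T$, so the condition quantified over $y\in S$ is the one needed. Using the condition over $x\in T$ here would give an inequality in the wrong direction. The argument uses no triangle inequality on $D$, only the 1-Lipschitz property of $g_p$ in $\mathbb{R}^2$. This is why the lemma also applies to non-metric distances such as DTW, provided they satisfy the domination condition.
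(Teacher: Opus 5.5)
Your proof is correct, and it differs from the paper's at the step that matters. To bound $\phi_p(T)-\phi_p(S)$, the paper anchors at the minimizer $x^\star$ of $g_p$ over $T$. It writes $g_p(x^\star)-g_p(y)\le\|x^\star-y\|$ for all $y\in S$, then ``takes the minimum over $y\in S$ on both sides'' to obtain $\phi_p(T)-\phi_p(S)\le\min_{y\in S}\|x^\star-y\|$, and finishes with the domination condition for $x^\star\in T$. That minimization step is invalid. Minimizing the left-hand side over $y$ gives $g_p(x^\star)-\max_{y\in S}g_p(y)$, not $g_p(x^\star)-\min_{y\in S}g_p(y)$, and the resulting inequality is false in general. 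For instance, take $p=(0,0)$, $T=\langle(1,0)\rangle$ and $S=\langle(1,0),(0,0)\rangle$. Then $\phi_p(T)-\phi_p(S)=1$ while $\min_{y\in S}\|x^\star-y\|=0$. The lemma still holds in this example ($D_H(T,S)=1$), but only through the other half of the hypothesis, the one quantified over $y\in S$. That is exactly what you use: you anchor at the minimizer $y^\ast\in S$, match it to a nearby $x^\ast\in T$, and chain $\phi_p(T)\le g_p(x^\ast)\le\phi_p(S)+\|x^\ast-y^\ast\|\le\phi_p(S)+D(T,S)$. So your route is the correct version of the argument, not just a stylistic variant. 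Your closing remark about using the condition over $x\in T$ ``in the wrong direction'' pinpoints the paper's error, and the paper's ``similarly'' step for $\phi_p(S)-\phi_p(T)$ has the same defect.

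One small slip: in your first paragraph, the reverse inequality $\phi_p(S)-\phi_p(T)\le D(T,S)$ comes from the \emph{first} half of the hypothesis, not the second. That is the condition over $x\in T$, applied to the minimizer of $g_p$ over $T$. Citing the first half directly also avoids silently using $D(S,T)=D(T,S)$, which the lemma does not state.
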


\begin{proof}
{\color{black}{Please see the artifact repository for the detailed proof.}}
% Fix $p\in\mathbb{R}^2$ and let $g_p(z)=\|z-p\|$ as in Lemma~\ref{lem:lipschitz}.
% Let $x^\star = \arg\min_{x\in T} g_p(x)$, and $y^\star = \arg\min_{y\in S} g_p(y)$, so that $\phi_p(T)=g_p(x^\star)$ and $\phi_p(S)=g_p(y^\star)$.
% We first upper bound $\phi_p(T)-\phi_p(S)$.
% For any $y\in S$, by the triangle inequality, we have:
% \begin{equation}
% g_p(x^\star)=\|x^\star-p\| \le \|x^\star-y\|+\|y-p\|=\|x^\star-y\|+g_p(y).
% \end{equation}
% Rearranging terms yields $g_p(x^\star)-g_p(y)\le \|x^\star-y\|$
% % \begin{equation}
% % g_p(x^\star)-g_p(y)\le \|x^\star-y\|.
% % \end{equation}
% Taking the minimum over $y\in S$ on both sides gives
% \begin{equation}
% g_p(x^\star)-\min_{y\in S}g_p(y)\le \min_{y\in S}\|x^\star-y\|
% \end{equation}
% Using $\min_{y\in S}g_p(y)=\phi_p(S)$, we obtain:
% \begin{equation}
% \phi_p(T)-\phi_p(S)\le \min_{y\in S}\|x^\star-y\|
% \end{equation}
% By the domination assumption (applied to $x^\star\in T$), $\min_{y\in S}\|x^\star-y\|\le D(T,S)$, hence:
% \begin{equation}
% \phi_p(T)-\phi_p(S)\le D(T,S)
% \end{equation}
% Similarly, we can upper bound $\phi_p(S)-\phi_p(T)$ symmetrically. Combining the two inequalities gives $|\phi_p(T)-\phi_p(S)|\le D(T,S)$.
\end{proof}

The domination condition in Eq.~\ref{eq:dominates_pointwise_distances} is mild and is satisfied by the classical trajectory distances considered in this study.
For Hausdorff distance, it follows directly from its max--min definition, which explicitly takes the maximum over point-to-trajectory nearest-neighbor distances.
For alignment-based distances such as DTW, although they are non-metric, every point participates in the alignment and contributes a non-negative matching cost, implying that the minimum distance from any point to the other trajectory cannot exceed the total alignment cost.
DFD satisfies the condition for similar reasons.
Therefore, Eq.~\ref{eq:holds_lower_bound} holds for Hausdorff distance, DFD, and DTW.

% \noindent\emph{Proof sketch (DTW as an example).}
% Consider $T=\langle x_1,\ldots,x_n\rangle$ and $S=\langle y_1,\ldots,y_m\rangle$. Let $\pi$ be any DTW warping path, and denote its DTW cost under the original pointwise distance by
% \begin{equation}
% \mathrm{cost}_{\pi}(T,S)=\sum_{(i,j)\in\pi}\|x_i-y_j\|.
% \end{equation}
% Applying a 1-Lipschitz mapping $g(\cdot)$ yields $|g(x_i)-g(y_j)|\le \|x_i-y_j\|$ for every aligned pair $(i,j)\in\pi$, and thus
% \begin{equation}
% \mathrm{cost}_{\pi}\!\left(g(T),g(S)\right)
% =\sum_{(i,j)\in\pi}|g(x_i)-g(y_j)|
% \le \sum_{(i,j)\in\pi}\|x_i-y_j\|
% =\mathrm{cost}_{\pi}(T,S).
% \end{equation}
% Since DTW takes the minimum cost over all warping paths, we have
% \begin{equation}
% D_{\mathrm{DTW}}\!\left(g(T),g(S)\right)
% =\min_{\pi}\mathrm{cost}_{\pi}\!\left(g(T),g(S)\right)
% \le \min_{\pi}\mathrm{cost}_{\pi}(T,S)
% = D_{\mathrm{DTW}}(T,S).
% \end{equation}
% The same monotonicity argument applies to Hausdorff distance and DFD, as they are also defined by max/min aggregations over pointwise distances.

{
\begin{algorithm}[t]
\small
\caption{Tightness-Aware Point-Pivot Selection}
\label{alg:tightness-pivot}

\KwIn{training trajectories $\mathcal{T}_{\mathit{train}}$;
ground-truth distance access $D(\cdot,\cdot)$ on $\mathcal{T}_{\mathit{train}}$;
pivot budget $k$;
number of sampled trajectory pairs $M$;
candidate pool size $B$;
small constant $\epsilon>0$}
\KwOut{selected pivot point set $P$}

% by sampling points 
Build a point candidate pool $U$ from trajectories in $\mathcal{T}_{\mathit{train}}$\; \label{alg:line_1}
Sample $M$ trajectory pairs $\mathcal{S}=\{(T_a,T_b)\}$ from $\mathcal{T}_{\mathit{train}}$\; \label{alg:line_2}
\ForEach{$(T_a,T_b)\in\mathcal{S}$}{ \label{alg:line_3}
    $d(T_a,T_b)\leftarrow D(T_a,T_b)+\epsilon$; $c(T_a,T_b)\leftarrow 0$\; \label{alg:line_4}
}
$P\leftarrow\emptyset$; $R\leftarrow U$\; \label{alg:line_5}

\For{$i\leftarrow 1$ \KwTo $k$}{ \label{alg:line_6}
    Sample a subset $C\subseteq R$ where $|C|=\min\{B,|R|\}$\; \label{alg:line_7}
    % Select $p^\star\in C$ that maximizes the average marginal gain by updating $c(T_a,T_b)$ to
    $p^\star\leftarrow\max(c(T_a,T_b),|\phi_p(T_a)-\phi_p(T_b)|)/d(T_a,T_b)$ over all $(T_a, T_b)\in \mathcal{S}$\; \label{alg:line_8}
    $P\leftarrow P\cup\{p^\star\}$; $R\leftarrow R\setminus\{p^\star\}$\; \label{alg:line_9}
    \ForEach{$(T_a,T_b)\in\mathcal{S}$}{ \label{alg:line_10}
        $c(T_a,T_b)\leftarrow
        \max(c(T_a,T_b),|\phi_{p^\star}(T_a)-\phi_{p^\star}(T_b)|)$\; \label{alg:line_11}
    }
}
\Return{$P$}\; \label{alg:line_12}
\end{algorithm}
}

\subsubsection{Tightness-Aware Pivot Selection}
\label{sec:tightness-aware}

% Given a candidate pivot set $\mathcal{P}$ and a pivot budget $k$, our goal is to select a subset $P \subseteq \mathcal{P}$ that yields tight lower bounds for trajectory distance estimation. Rather than optimizing worst-case tightness, we focus on the \emph{average relative tightness} over a sampled set of trajectory pairs, which is more aligned with similarity ranking tasks.

We now study how to select pivots that optimize the quality of the induced lower-bound representations. Given a candidate pivot set $\mathcal{P}$ and a target pivot budget $k$, our goal is to select a subset $P\subseteq \mathcal{P}$, $|P|=k$, such that the induced lower bounds are as tight as possible for trajectory similarity learning.

\textbf{Tightness-aware objective.}
% Let $\mathcal{S}$ be a set of sampled trajectory pairs $(T,S)$ with ground-truth distance $D(T,S)$. For a pivot set $P$, recall that the pivot-based lower bound is
% \begin{equation}
% LB_P(T,S) = \max_{p \in P} \left| \phi_p(T) - \phi_p(S) \right|,
% \end{equation}
% where $\phi_p(T)=\min_{x\in T}\|x-p\|$ denotes the response of trajectory $T$ to pivot point $p$.
% We define the tightness-aware objective as
% \begin{equation}
% \label{eq:tightness-objective}
% F(P) = 
% \mathbb{E}_{(T,S)\sim \mathcal{S}}
% \left[
% \frac{LB_P(T,S)}{D(T,S)}
% \right].
% \end{equation}
% Pairs with $D(T,S)=0$ are excluded from $\mathcal{S}$, or equivalently handled by adding a small $\epsilon$ to the denominator in practice.
For a pivot set $P$, recall that the induced lower bound between trajectories $T$ and $S$ is $ LB_P(T,S)=\max_{p\in P} f_p(T,S),$ where $f_p(T,S)=|\phi_p(T)-\phi_p(S)|$.
Given a set of sampled trajectory pairs $\mathcal{S}$, we define the following
tightness-aware objective:
% \begin{equation}
% F(P)=
% \mathbb{E}_{(T,S)\sim \mathcal{S}}
% \left[
% \frac{LB_P(T,S)}{D(T,S)}
% \right].
% \end{equation}
\begin{equation}
\label{eq:tightness-objective}
\max_{P \subseteq \mathcal{P},\, |P| = k}
F(P) =
\mathbb{E}_{(T,S)\sim\mathcal{S}}
\left[\frac{LB_P(T,S)}{D(T,S)}\right]
\end{equation}
By normalizing the lower bound with the ground-truth distance, this objective measures how well the induced representation preserves relative similarity ordering across trajectory pairs, which directly aligns with ranking-based evaluation metrics.
Maximizing $F(P)$ encourages the selected pivots to produce tight lower bounds relative to the original distance, which benefits similarity ranking. 
\textbf{Optimization via greedy selection.}
{\color{black}{Eq.~\ref{eq:tightness-objective} states a cardinality-constrained submodular maximization problem, which is NP-hard in general~\cite{nemhauser1978analysis}.
% We next describe how this problem can be approximated efficiently using a greedy approach. 
% {\color{black}{The key idea is to iteratively select pivot points that maximize the improvement in lower-bound tightness. Specifically, we seek a pivot set whose induced lower bounds closely approximate the corresponding ground-truth trajectory distances over a collection of sampled trajectory pairs.}}
We thus proceed to approximate this problem efficiently using a greedy approach. The key idea is to iteratively select pivot points that maximize the improvement in lower-bound tightness. Starting from an empty pivot set, the algorithm repeatedly evaluates candidate pivots according to their marginal gains and adds the most informative pivot to the current pivot set.}}

{\color{black}{Algorithm~\ref{alg:tightness-pivot} implements this idea using a sampled set of trajectory pairs. It first constructs a candidate pool of pivot points and samples trajectory pairs to estimate the tightness objective (lines~\ref{alg:line_1}--\ref{alg:line_2}). For each sampled pair, the ground-truth distance is computed once and is then used to initialize the current lower-bound value (lines~\ref{alg:line_3}--\ref{alg:line_5}). During each iteration, the algorithm samples a subset of candidate pivots and evaluates their marginal gains on the sampled trajectory pairs (lines~\ref{alg:line_7}--\ref{alg:line_8}). The pivot with the largest gain is selected, and the corresponding lower-bound values are updated (lines~\ref{alg:line_9}--\ref{alg:line_11}). This process continues until the pivot budget is reached, after which the selected pivot set is returned (line~\ref{alg:line_12}).}}

% Specifically, starting from an empty set $P_0=\emptyset$, at each iteration $i$,
% we select
% \begin{equation}
% p_i = \arg\max_{p\in \mathcal{P}\setminus P_{i-1}}
% \left( F(P_{i-1}\cup\{p\}) - F(P_{i-1}) \right),
% \end{equation}
% and update $P_i=P_{i-1}\cup\{p_i\}$ until $|P_k|=k$.

The greedy procedure described in Algorithm~\ref{alg:tightness-pivot} admits a formal approximation guarantee, as stated below.

\begin{theorem}
\label{thm:submodular}
The tightness-aware objective function $F(P)$ is monotone and submodular. Let $P^\star$ be the optimal pivot set of size $k$, and let $P_k$ be the set returned by the greedy algorithm. Then, the greedy algorithm returns a pivot set whose objective value is guaranteed to be within a constant factor of the optimum, i.e., $F(P_k) \ge (1 - 1/e)\, F(P^\star)$.
% we have
% $F(P_k) \ge (1 - 1/e)\, F(P^\star)$.
\end{theorem}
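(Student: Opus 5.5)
The plan is to reduce $F$ to a nonnegative weighted sum of simple "max-coverage" set functions. Then monotonicity and submodularity follow by closure properties, and the approximation factor follows from the classical greedy guarantee of Nemhauser, Wolsey and Fisher~\cite{nemhauser1978analysis}.

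For each sampled pair $(T,S)\in\mathcal{S}$ write $w_{T,S}=1/D(T,S)$, or $1/(D(T,S)+\epsilon)$ as in Algorithm~\ref{alg:tightness-pivot}; this is a fixed positive constant independent of $P$. For each candidate pivot $p$ write $a_p(T,S)=f_p(T,S)=|\phi_p(T)-\phi_p(S)|\ge 0$. Then
\[
F(P)=\frac{1}{|\mathcal{S}|}\sum_{(T,S)\in\mathcal{S}} w_{T,S}\, h_{T,S}(P),\qquad h_{T,S}(P)=\max_{p\in P} a_p(T,S),
\]
with the convention $h_{T,S}(\emptyset)=0$. This convention matches the initialization $c(T_a,T_b)=0$ in the algorithm and is consistent because all $a_p\ge 0$. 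Note that $F(\emptyset)=0$, so $F$ is normalized.

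Next I would show that each $h=h_{T,S}$ is monotone and submodular. Monotonicity is immediate: for $A\subseteq B$, the maximum over a superset of nonnegative numbers can only grow, so $h(A)\le h(B)$. For submodularity, take $A\subseteq B$ and $q\notin B$. The marginal gain is $h(A\cup\{q\})-h(A)=\max(0,\,a_q-h(A))$. The map $t\mapsto\max(0,a_q-t)$ is nonincreasing, and $h(A)\le h(B)$, so the gain at $A$ is at least the gain at $B$. That is exactly the diminishing-returns property. A nonnegative linear combination of monotone submodular functions is again monotone submodular, so $F$ inherits both properties.

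The guarantee $F(P_k)\ge(1-1/e)F(P^\star)$ then follows from the Nemhauser–Wolsey–Fisher theorem for the exact greedy rule $p_i=\arg\max_{p}\bigl(F(P_{i-1}\cup\{p\})-F(P_{i-1})\bigr)$. I would recall its standard one-line argument: by submodularity and monotonicity, $F(P^\star)-F(P_{i-1})\le k\,(F(P_i)-F(P_{i-1}))$. Iterating gives $F(P^\star)-F(P_k)\le(1-1/k)^k F(P^\star)\le e^{-1}F(P^\star)$.

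The main obstacle is not the submodularity itself, which is routine, but reconciling the statement with what Algorithm~\ref{alg:tightness-pivot} actually does. It evaluates only a random subset $C$ of size $B$ at each step, and the optimum $P^\star$ ranges over the whole pool $\mathcal{P}$. I would therefore state the guarantee for the idealized greedy with $C=R$, i.e.\ $B\ge|U|$, where line~\ref{alg:line_8} maximizes the exact marginal gain. I would then remark that with subsampled candidates one obtains only the stochastic-greedy bound $(1-1/e-\delta)$ in expectation, for $B$ of order $\frac{|U|}{k}\log\frac1\delta$. I would also note that the $\epsilon$-shift in the denominators changes only the weights $w_{T,S}$, so the argument is unaffected.
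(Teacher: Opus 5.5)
Your proposal is correct and matches the paper's proof: both decompose $F$ into per-pair functions $\max_{p\in P} a_p(T,S)$ (the paper folds the $1/D(T,S)$ factor into $a_p$ rather than keeping it as a weight), obtain diminishing returns from the marginal gain $\max(0, a_p - \max_{q\in A} a_q)$, use closure under nonnegative combinations, and invoke Nemhauser--Wolsey--Fisher. Your caveat is a valid refinement the paper's sketch does not address: the $(1-1/e)$ bound strictly holds only for exact greedy with $C=R$, while subsampled candidates give only a $(1-1/e-\delta)$ bound in expectation.
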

\begin{proof}
{\color{black}{Please see the artifact repository for the detailed proof.}}
\end{proof}
% \begin{proof}[Proof Sketch]
% We formalize the above intuition by showing that the objective can be decomposed into a sum of monotone submodular functions defined on individual trajectory pairs.
% Fix a trajectory pair $(T,S) \in \mathcal{S}$ and define the normalized contribution of a pivot $p$ as follows:
% \begin{equation}
% a_p(T,S) = \frac{|\phi_p(T) - \phi_p(S)|}{D(T,S)} \ge 0
% \end{equation}
% The per-pair objective can be written as follows:
% \begin{equation}
% g_{T,S}(P) = \frac{LB_P(T,S)}{D(T,S)} = \max_{p \in P} a_p(T,S)
% \end{equation}
% The function $g_{T,S}(P)$ is monotone, since adding elements to $P$ cannot decrease a maximum. It is also submodular because it exhibits diminishing returns, i.e., for any $A \subseteq B$ and any $p \notin B$,
% \begin{equation}
% g_{T,S}(A \cup \{p\}) - g_{T,S}(A)
% \ge
% g_{T,S}(B \cup \{p\}) - g_{T,S}(B),
% \end{equation}
% as the marginal gain equals $\max(0, a_p - \max_{q\in A} a_q)$, which is no smaller than $\max(0, a_p - \max_{q\in B} a_q)$.
% Since $F(P)$ is an expectation (or empirical average) of $g_{T,S}(P)$ over sampled pairs, and nonnegative linear combinations preserve monotonicity and submodularity, $F(P)$ is also monotone submodular.
% Applying the classical result of Nemhauser et al.~\cite{nemhauser1978analysis} for greedy maximization of a monotone submodular function under a cardinality constraint yields the stated approximation guarantee.
% \end{proof}

Intuitively, tightness-aware pivot selection favors pivots that consistently separate trajectory pairs under the original distance, thereby improving the global discriminative power of the induced representation.

\subsubsection{Hard-Aware Pivot Selection}
\label{sec:hard-aware}

% While tightness-aware pivot selection optimizes the expected lower-bound quality under a uniform pair distribution, ranking performance is often dominated by a small number of hard trajectory pairs that lie close under the ground-truth distance. Trajectory similarity learning is ultimately evaluated by top-$k$ ranking metrics such as HR and Recall, where performance is primarily affected by hard trajectory pairs, i.e., pairs of trajectories with small ground-truth distances that are difficult to separate. In contrast to uniform tightness-aware selection, hard-aware selection instantiates the same tightness objective by emphasizing trajectory pairs that are most critical to the accuracy of top-$k$ retrieval. Errors in separating such near-neighbor pairs lead to ranking mistakes, whereas loose lower bounds for distant pairs typically have little impact on retrieval accuracy. Motivated by this observation, we consider a hard-aware instantiation of the tightness objective that focuses on improving lower-bound quality for ranking-critical trajectory pairs.

{\color{black}{While the tightness-aware strategy seeks pivots that improve lower-bound quality over uniformly sampled trajectory pairs, not all trajectory pairs contribute equally to retrieval performance. Trajectory similarity learning is typically evaluated using top-$k$ ranking metrics such as HR and Recall, where errors among nearby trajectories are often far more influential than errors among distant trajectories.
Consequently, optimizing the average lower-bound quality may not always yield the best retrieval performance. Motivated by this observation, we introduce a hard-aware pivot selection
strategy that focuses on ranking-critical trajectory pairs. Specifically, instead of estimating the objective using uniformly sampled pairs, we construct a hard-pair distribution consisting of near-neighbor trajectory pairs under the ground-truth distance.}}

% Motivated by this observation, we propose a hard-aware pivot selection strategy that explicitly focuses on improving lower-bound quality for ranking-critical trajectory pairs.

\textbf{Hard pair construction.}
% Let $\mathcal{T}$ denote the trajectory dataset and let $D(\cdot,\cdot)$ be the ground-truth trajectory distance. For each anchor trajectory $T \in \mathcal{T}$, we define its hard neighbor set as
To bias the tightness objective towards ranking-critical regions, we construct a hard-pair sample set as follows. Let $\mathcal{T}_{\mathit{train}}$ denote the training trajectory dataset and let $D(\cdot,\cdot)$ be the ground-truth trajectory distance. For each anchor trajectory $T \in \mathcal{T}_{\mathit{train}}$, we define its hard neighbor set as follows:
\begin{equation}
H(T) = \text{Top-}m \text{ nearest neighbors of } T \text{ under } D
\end{equation}
The hard-pair sample set is then constructed as follows:
\begin{equation}
\mathcal{S}_{\text{hard}} = \{(T,S) \mid T \in \mathcal{T}_{\mathit{train}}, S \in H(T)\}
\end{equation}
In practice, $m$ is small (e.g., 10--50), and hard neighbors are computed only for a subset of anchor trajectories to control preprocessing cost and to obtain an efficient estimate of the objective. 
% This asymmetric construction reflects the retrieval setting, where ranking quality is evaluated with respect to a query trajectory.

% \begin{equation}
% \mathcal{H}(T) = \text{Top-}m \text{ nearest neighbors of } T \text{ under } D.
% \end{equation}
% The hard-pair sample set is then constructed as
% \begin{equation}
% \mathcal{S}_{\text{hard}} = \{(T,S) \mid T \in \mathcal{T},\, S \in \mathcal{H}(T)\}.
% \end{equation}
% In practice, $m$ is small (e.g., $10$--$50$), and hard neighbors are computed only for a subset of anchor trajectories to control preprocessing cost. This asymmetric construction reflects the retrieval setting, where ranking quality is evaluated with respect to a query trajectory.

\begin{table}[t]
\centering
% \footnotesize
\small
\renewcommand{\arraystretch}{0.85}
\setlength{\tabcolsep}{3.8pt}
\caption{Trajectory dataset statistics.}
\vspace{-0.3cm}
\label{tab:dataset_properties}
\begin{tabular}{cccccc}
\toprule
Dataset & \#Total & \#Train & \#Query & \#Candidates & Avg. \#Points \\
\midrule
Geolife & 13{,}386      & 3{,}000 & 1{,}000 & 9{,}386     & 437.80 \\
Chengdu & 5{,}000       & 1{,}000 & 1{,}000 & 3{,}000     & 228.44 \\
Porto   & 1{,}601{,}579 & 3{,}000 & 500     & 1{,}598{,}079 & 48.91 \\
T-Drive & 10{,}000      & 2{,}000 & 7{,}000 & 7{,}000     & 69.99 \\
AIS     & 10{,}000      & 2{,}000 & 7{,}000 & 7{,}000     & 80.45 \\
\bottomrule
\end{tabular}
\vspace{-0.5cm}
\end{table}

% \paragraph{Hard-aware objective.}
% Using the hard-pair set $\mathcal{S}_{\text{hard}}$, we formulate the following hard-aware pivot selection problem:
% \begin{equation}
% \label{eq:hard-objective}
% F_{\text{hard}}(P) =
% \mathbb{E}_{(T,S)\sim \mathcal{S}_{\text{hard}}}
% \left[
% \frac{LB_P(T,S)}{D(T,S)}
% \right],
% \end{equation}
% By focusing the objective on near-neighbor trajectory pairs,
% hard-aware selection directly optimizes the quality of lower bounds in regions where small distortions are most likely to change the top-$k$ ranking.
\textbf{Hard-aware objective.}
Using the hard-pair set $\mathcal{S}_{\text{hard}}$, we obtain the following hard-aware instantiation of the tightness objective:
\begin{equation}
F_{\text{hard}}(P) =
\mathbb{E}_{(T,S)\sim \mathcal{S}_{\text{hard}}}
\left[
\frac{LB_P(T,S)}{D(T,S)}
\right]
\end{equation}
By concentrating the expectation on near-neighbor trajectory pairs, this instantiation emphasizes lower-bound quality in regions where small distortions are most likely to affect the top-$k$ ranking.

% The hard-aware objective $F_{\text{hard}}(P)$ shares the same functional form as Eq.~\ref{eq:tightness-objective} and differs only in the underlying pair distribution used to estimate the expectation. As a result, it remains monotone and submodular with respect to the pivot set $P$, since it is defined as an expectation of per-pair maximum functions over pivot contributions. Importantly, the optimization procedure remains unchanged, and Algorithm~\ref{alg:tightness-pivot} can be reused verbatim. The only difference is that marginal gains are estimated under the hard-aware pair distribution.
{\color{black}{\textbf{Optimization.} The hardness-aware objective differs from Eq.~\ref{eq:tightness-objective} only in the trajectory-pair distribution used to estimate the expectation. Therefore, it has the same monotonicity and submodularity properties,
and Algorithm~\ref{alg:tightness-pivot} can be reused directly. The only difference is that marginal gains are evaluated on the hard-pair set $S_{\text{hard}}$ instead of on uniformly sampled trajectory pairs.}}

% \textbf{Discussion.}
% Tightness-aware and hard-aware pivot selection optimize complementary aspects of representation quality. The former improves global approximation fidelity of lower bounds, while the latter explicitly targets ranking-critical ambiguities among near neighbors. As demonstrated in Sec.~4, hard-aware selection consistently yields further improvements in HR@$k$ and NDCG, particularly on datasets with dense local neighborhoods or highly similar trajectories.

% \textbf{Discussion.}
% Tightness-aware and hard-aware pivot selection correspond to different instantiations of the same pivot selection objective under different pair distributions. Tightness-aware selection estimates the objective under a uniform pair distribution, yielding a globally robust representation. Hard-aware selection, on the other hand, emphasizes ranking-critical near-neighbor pairs through a biased pair distribution, directly targeting the primary source of inaccuracies in top-$k$ retrieval. Both strategies are unified under the same pivot-based lower-bound framework and share an identical greedy optimization procedure. 

{\color{black}{\textbf{Discussion.} Although both strategies optimize the same lower-bound objective, they target different notions of representation quality. Tightness-aware selection aims to improve the average lower-bound quality over broadly sampled trajectory pairs, resulting in a globally robust representation. In contrast, hardness-aware selection concentrates on near-neighbor trajectory pairs that are most likely to affect retrieval rankings. Therefore, tightness-aware selection is preferable when the goal is to obtain uniformly strong lower bounds across the trajectory space, whereas hardness-aware selection is better suited for retrieval-oriented scenarios that call for distinguishing highly similar trajectories.}}

\subsection{Complexity Analysis}

We analyze the computational complexity of the proposed framework, separating offline preprocessing from online query processing. The former is performed once per dataset, while the latter is executed per query during retrieval.

\textbf{Offline preprocessing.}
% Let $|U|$ be the size of the point candidate pool, $M=|\mathcal{S}|$ the number of sampled trajectory pairs, $k$ the pivot budget, and $B$ the per-iteration candidate subset size. Let $\mathcal{T}(\mathcal{S})$ denote the set of trajectories appearing in $\mathcal{S}$ and define $L_{\mathcal{S}}=\sum_{T\in\mathcal{T}(\mathcal{S})}|T|$. We precompute pivot responses $\phi_p(T)=\min_{x\in T}\|x-p\|$ for all $p\in U$ and $T\in\mathcal{T}(\mathcal{S})$ in
% $O(|U|\cdot L_{\mathcal{S}})$ time. Given these responses, evaluating a candidate pivot on $\mathcal{S}$ costs $O(|\mathcal{S}|)$ time, and the greedy selection costs $O(kB|\mathcal{S}|)$ (worst case $B=|U|$). The use of sampled trajectory pairs allows pivot selection to scale independently of the full training set size while retaining stable selection quality in practice. Once pivots are fixed, encoding each database trajectory takes $O(k|T|)$ time (plus $O(k)$ for endpoint-to-pivot features in IPSE).
Let $|U|$ be the size of the point candidate pool, $M=|\mathcal{S}|$ be the number of sampled trajectory pairs, $k$ be the pivot budget, and $B$ be the per-iteration candidate subset size. Let $\mathcal{T}(\mathcal{S})$ denote the set of trajectories appearing in $\mathcal{S}$ and define $L_{\mathcal{S}}=\sum_{T\in\mathcal{T}(\mathcal{S})}|T|$.
For a candidate pivot $p$, its pivot responses $\phi_p(T)=\min_{x\in T}\|x-p\|$ are computed for all $T\in\mathcal{T}(\mathcal{S})$ in $O(L_{\mathcal{S}})$ time. Given these responses, evaluating the normalized marginal gain of $p$ on $\mathcal{S}$ takes $O(|\mathcal{S}|)$ time. In each iteration, evaluating $B$ candidate pivots takes $O(B(L_{\mathcal{S}}+|\mathcal{S}|))$ time, and the total greedy selection cost is $O(kB(L_{\mathcal{S}}+|\mathcal{S}|))$. 
% The use of sampled trajectory pairs allows pivot selection to scale independently of the full training set size while retaining stable selection quality (to be studied empirically).
Once pivots are fixed, encoding each database trajectory takes $O(k|T|)$ time (plus $O(k)$ for endpoint-to-pivot features in IPSE).

% \textbf{Online query processing.}
% For a query trajectory $Q$, representation construction takes
% $O(k|Q|)$ time (and $O(k)$ additional time for IPSE).
% Let $d$ be the representation dimensionality (e.g., $d=k$ for PIVOT and $d=3k$ for IPSE). Computing the representation distance to a database trajectory takes $O(d)$ time, independently of the original trajectory lengths, enabling efficient large-scale top-$k$ retrieval via standard vector search.

{\color{black}{\textbf{Online query processing.}
For a query trajectory $Q$, representation construction takes $O(k|Q|)$ time (and $O(k)$ additional time for IPSE).
Let $d$ be the representation dimensionality (e.g., $d=k$ for PIVOT and $d=3k$ for IPSE). Computing the representation distance to a database trajectory takes $O(d)$ time, independently of the original trajectory lengths. The resulting lower-bound scores are used directly for candidate ranking, and the top-$k$ trajectories are returned as the results enabling efficient large-scale top-$k$ retrieval via standard vector search.}}

\textbf{Discussion.}
In practice, both $k$ and $|\mathcal{S}|$ are small (e.g., $k=32$ and $|\mathcal{S}|$ on the order of a few thousand), and all offline preprocessing is performed once per dataset. Unlike neural embedding methods that require iterative model training, the proposed framework relies on a finite greedy selection procedure with predictable computational cost and low query-time overhead.
Moreover, from a system perspective, the proposed framework follows the same high-level retrieval pipeline as learning-based trajectory embedding methods, i.e., offline encoding followed by efficient vector distance computation at query time, while avoiding expensive model training.
% , pairwise inference, and distance-specific retraining. 
This design allows the framework to achieve strong ranking performance with predictable and low query-time cost.

\section{Experiments}
\label{sec:experiments}

\subsection{Experimental Setting}

\noindent \textbf{Datasets.}
We evaluate the proposed approach on five real-world trajectory datasets that are used in recent trajectory similarity learning studies. Specifically, we use the Geolife, Chengdu, and Porto datasets as employed in ConvTraj~\cite{chang2024revisiting}, as well as the T-Drive and AIS datasets as used in SIMformer~\cite{yang2024simformer}. These datasets cover diverse application scenarios, including pedestrian, urban taxi, and maritime vessel movement, and they exhibit varying trajectory lengths, sampling rates, and spatial distributions.
All datasets are used following the preprocessing procedures, data splits, and evaluation protocols reported in the corresponding studies. This allows us to conduct direct and consistent comparisons with previously reported methods without introducing dataset- or protocol-specific variations. Table~\ref{tab:dataset_properties} summarizes key dataset statistics.

% \noindent \textbf{Trajectory distance measures.}
% We consider three classical free-space trajectory distance measures that are commonly used in prior work: Dynamic Time Warping (DTW), Hausdorff distance, and Discrete Fr\'echet Distance (DFD). 
% % For DTW, we use the unnormalized version, consistent with existing learning-based approaches. 
% Ground-truth distances are computed offline and are used only for evaluation and for pivot selection during training.

\begin{table*}[t]
\centering
\small
\renewcommand{\arraystretch}{0.82}
\setlength{\tabcolsep}{3.2pt}
\caption{Embedding results on Geolife. Best is marked by \textbf{\underline{bold+underline}}, second best is marked by \underline{underline}. \paramark{Within each instantiation, the best and the second best results are highlighted by \colorbox{bestgray}{darker} and \colorbox{secondgray}{lighter} shading, respectively.}}
\label{tab:geolife}
\vspace{-0.2cm}
\label{tab:geolife_all}
\begin{tabular}{cc|cccc|cccc|cccc}
\toprule
\multicolumn{2}{c|}{Method}
& \multicolumn{4}{c|}{Hausdorff}
& \multicolumn{4}{c|}{DFD}
& \multicolumn{4}{c}{DTW} \\
& & HR@1 & HR@10 & HR@50 & R10@50
  & HR@1 & HR@10 & HR@50 & R10@50
  & HR@1 & HR@10 & HR@50 & R10@50 \\
\midrule

\multicolumn{2}{c|}{t2vec~\cite{li2018deep}}
& 22.00 & 24.48 & 26.64 & 44.60
& 25.70 & 28.33 & 32.44 & 53.45
& 25.30 & 28.91 & 32.81 & 55.40 \\
\multicolumn{2}{c|}{TrjSR~\cite{cao2021accurate}}
& 28.30 & 37.26 & 42.75 & 66.56
& 25.70 & 33.29 & 36.94 & 61.62
& 27.80 & 36.52 & 40.74 & 67.91 \\
\multicolumn{2}{c|}{TrajCL~\cite{chang2023contrastive}}
& 22.03 & 37.21 & 52.49 & 72.55
& 25.40 & 38.98 & 54.72 & 75.82
&  8.47 & 14.63 & 19.16 & 32.05 \\
\multicolumn{2}{c|}{NeuTraj~\cite{yao2019computing}}
& 34.53 & 48.40 & 61.88 & 80.38
& 46.37 & 64.47 & 76.43 & 94.44
& 25.13 & 33.68 & 41.72 & 62.87 \\
\multicolumn{2}{c|}{Traj2SimVec~\cite{zhang2021trajectory}}
& 26.46 & 42.39 & 48.26 & 65.12
& 27.13 & 42.75 & 50.27 & 70.20
& 16.22 & 15.39 & 20.37 & 35.09 \\
\multicolumn{2}{c|}{TrajGAT~\cite{yao2022trajgat}}
& 19.80 & 30.57 & 44.03 & 66.69
& 17.30 & 25.87 & 37.90 & 61.91
& 13.80 & 24.77 & 35.26 & 59.36 \\
\multicolumn{2}{c|}{ConvTraj~\cite{chang2024revisiting}}
& 46.17 & 63.69 & 76.12 & 95.20
& 51.80 & 68.86 & 79.52 & 97.34
& 31.70 & 46.46 & \gbest{59.26} & 83.70 \\
\midrule

\multicolumn{2}{c|}{SE}
& 38.20 & 51.25 & 57.97 & 69.20
& 55.10 & 73.92 & 86.43 & 97.91
& 37.00 & 46.89 & 55.30 & 82.38 \\
\midrule

% ======================= PIVOT (block background + global marks if hit) =======================
\multirow{6}{*}{PIVOT} & Random
& 55.20 & 69.53 & 77.07 & 95.50
& 38.90 & 49.60 & 54.54 & 89.82
& 31.20 & 39.84 & 44.64 & 76.03 \\
& Facility %~\cite{krause2008facility}
& 66.50 & 79.54 & 87.33 & 98.47
& 43.60 & 54.47 & 58.99 & 94.12
& 34.70 & 42.19 & 45.93 & 78.59 \\
& FPS %~\cite{gonzalez1985clustering}
& \secondcell{\gsecond{68.80}} & \secondcell{\gsecond{83.72}} & \secondcell{\gsecond{90.72}} & \secondcell{\gsecond{99.33}}
& \secondcell{46.80} & \secondcell{56.16} & \secondcell{60.11} & \bestcell{95.61}
& \bestcell{35.70} & \secondcell{42.64} & 46.19 & {\color{black}{\secondcell{78.82}}} \\
& KM %~\cite{kaufman1990finding}
& 67.00 & 81.43 & 88.52 & 98.81
& 46.50 & 55.27 & 59.44 & 94.32
& 34.80 & 42.29 & \secondcell{46.27} & 78.75 \\
& Tightness
& 68.00 & 82.32 & 90.11 & 98.89
& 45.80 & 54.70 & 59.17 & 94.39
& 34.50 & 42.23 & 45.76 & 78.25 \\
& Hardness
& \bestcell{\gbest{70.70}} & \bestcell{\gbest{85.89}} & \bestcell{\gbest{92.86}} & \bestcell{\gbest{99.67}}
& \bestcell{47.20} & \bestcell{56.82} & \bestcell{60.19} & \secondcell{95.42}
& \secondcell{35.50} & \bestcell{42.78} & \bestcell{46.35} & \bestcell{79.16} \\
\midrule

% ======================= PSE =======================
\multirow{6}{*}{PSE} & Random
& 47.50 & 57.79 & 60.59 & 70.89
& 73.30 & 88.61 & 94.18 & 99.56
& 42.20 & 51.77 & 57.63 & 83.97 \\
& Facility
& 48.50 & 57.94 & 60.92 & 70.94
& 74.90 & 89.71 & 95.42 & 99.71
& 42.60 & 52.00 & 57.83 & 84.18 \\
& FPS
& \secondcell{49.50} & 57.98 & 61.12 & \secondcell{71.06}
& \secondcell{\gsecond{76.50}} & \secondcell{\gsecond{90.68}} & \secondcell{\gsecond{96.23}} & \secondcell{\gsecond{99.96}}
& \bestcell{\gbest{43.20}} & \secondcell{\gsecond{52.22}} & \secondcell{58.10} & \secondcell{\gsecond{84.42}} \\
& KM %~\cite{kaufman1990finding}
& 48.40 & 58.02 & 61.11 & 71.04
& 74.60 & 90.24 & 96.01 & 99.92
& \secondcell{\gsecond{42.70}} & 52.19 & 58.08 & 84.32 \\
& Tightness
& 48.40 & \secondcell{58.09} & \secondcell{61.15} & 71.05
& 75.20 & 90.46 & 95.80 & 99.94
& 42.40 & 51.88 & 57.93 & 84.25 \\
& Hardness
& \bestcell{49.70} & \bestcell{58.23} & \bestcell{61.26} & \bestcell{71.11}
& \bestcell{\gbest{77.90}} & \bestcell{\gbest{91.30}} & \bestcell{\gbest{96.69}} & \bestcell{\gbest{99.97}}
& \bestcell{\gbest{43.20}} & \bestcell{\gbest{52.49}} & \bestcell{\gsecond{58.16}} & \bestcell{\gbest{84.52}} \\
\midrule

% ======================= IPSE =======================
\multirow{6}{*}{IPSE} & Random
& 51.50 & 66.56 & 71.88 & 89.59
& 47.00 & 59.75 & 62.90 & 92.77
& 37.80 & 46.92 & 50.77 & 79.73 \\
& Facility
& 57.80 & 71.84 & 79.60 & \secondcell{91.19}
& 54.80 & 65.63 & 68.29 & 96.34
& 40.70 & 48.79 & 52.28 & 81.73 \\
& FPS
& 58.30 & 72.22 & 80.64 & 89.87
& \secondcell{57.40} & \bestcell{68.62} & \bestcell{70.69} & \bestcell{97.63}
& \secondcell{41.10} & \bestcell{49.62} & \bestcell{53.13} & \bestcell{82.25} \\
& KM %~\cite{kaufman1990finding}
& 58.50 & 72.87 & 80.31 & 90.99
& 54.90 & 66.40 & 69.01 & 96.57
& 40.30 & 48.91 & 52.70 & \secondcell{82.04} \\
& Tightness
& \secondcell{58.70} & \secondcell{73.32} & \secondcell{81.81} & \bestcell{91.45}
& 55.80 & 65.78 & 68.77 & 96.60
& 39.70 & 48.89 & 52.50 & 81.74 \\
& Hardness
& \bestcell{59.70} & \bestcell{74.13} & \bestcell{82.53} & 90.61
& \bestcell{57.70} & \secondcell{68.43} & \secondcell{70.39} & \secondcell{97.37}
& \bestcell{41.20} & \secondcell{49.24} & \secondcell{52.96} & \bestcell{82.25} \\
\bottomrule
\end{tabular}
\vspace{-0.2cm}
\end{table*}

\begin{table*}[t]
\centering
\small
\renewcommand{\arraystretch}{0.82}
\setlength{\tabcolsep}{3.2pt}
\caption{Embedding results on Porto. See the caption of Table~\ref{tab:geolife} for notation.}
\label{tab:porto}
\vspace{-0.2cm}
\label{tab:porto_all}
\begin{tabular}{cc|cccc|cccc|cccc}
\toprule
\multicolumn{2}{c|}{Method}
& \multicolumn{4}{c|}{Hausdorff}
& \multicolumn{4}{c|}{DFD}
& \multicolumn{4}{c}{DTW} \\
& & HR@1 & HR@10 & HR@50 & R10@50
  & HR@1 & HR@10 & HR@50 & R10@50
  & HR@1 & HR@10 & HR@50 & R10@50 \\
\midrule

\multicolumn{2}{c|}{t2vec}
& 4.00 & 7.28 & 10.46 & 17.08
& 5.20 & 7.66 & 11.09 & 17.84
& 6.80 & 9.60 & 12.53 & 20.92 \\

\multicolumn{2}{c|}{TrjSR}
& 6.87 & 14.79 & 26.71 & 33.46
& 8.12 & 14.26 & 20.37 & 37.48
& 9.66 & 18.29 & 21.08 & 38.23 \\

\multicolumn{2}{c|}{TrajCL}
& 7.07 & 15.37 & 23.47 & 35.37
& 7.07 & 18.01 & 28.31 & 42.97
& 0.53 & 2.33 & 4.56 & 6.96 \\

\multicolumn{2}{c|}{NeuTraj}
& 9.30 & 24.07 & 34.22 & 51.43
& 15.13 & 33.64 & 45.47 & 66.58
& 8.40 & 16.33 & 22.98 & 34.81 \\

\multicolumn{2}{c|}{Traj2SimVec}
& 6.34 & 16.32 & 27.34 & 37.45
& 7.64 & 20.03 & 30.09 & 44.11
& 7.26 & 13.22 & 15.30 & 18.30 \\

\multicolumn{2}{c|}{TrajGAT}
& 6.48 & 18.29 & 22.31 & 43.58
& 8.39 & 20.38 & 24.31 & 39.78
& 5.73 & 12.58 & 16.97 & 20.79 \\

\multicolumn{2}{c|}{ConvTraj}
& 15.33 & 33.27 & 45.98 & 67.20
& 22.73 & 40.59 & 53.35 & 77.33
& 10.13 & 24.46 & 35.36 & 52.83 \\

\midrule
\multicolumn{2}{c|}{SE}
& 8.80 & 18.72 & 28.01 & 50.02
& 10.20 & 21.16 & 32.56 & 57.96
& 11.60 & 21.12 & 28.56 & 51.06 \\
\midrule

% ======================= PIVOT =======================
\multirow{6}{*}{PIVOT} & Random
& 15.80 & 31.46 & 43.56 & 66.98
& 14.00 & 28.24 & 36.95 & 61.28
& 13.20 & 27.82 & 34.76 & 57.68 \\

& Facility
& 21.80 & 41.76 & 55.64 & 80.84
& 19.20 & 35.52 & 44.37 & 72.24
& \secondcell{19.80} & 35.26 & 39.93 & 65.98 \\

& FPS
& \bestcell{23.80} & \secondcell{44.38} & \secondcell{\gsecond{58.52}} & \secondcell{\gsecond{82.26}}
& \bestcell{22.00} & 37.02 & \secondcell{46.39} & {73.74}
& 19.40 & 35.40 & \bestcell{{41.08}} & \secondcell{{66.74}} \\

& KMedoids
& \secondcell{23.00} & 42.92 & 57.08 & 81.52
& \secondcell{20.80} & 36.78 & 45.29 & 73.44
& \bestcell{21.00} & 35.00 & 40.48 & 65.80 \\

& Tightness
& 22.40 & 43.12 & 58.36 & 81.72
& 19.60 & \secondcell{37.32} & 46.10 & \secondcell{74.26}
& 19.40 & \secondcell{35.54} & 40.90 & 66.58 \\

& Hardness
& \bestcell{23.80} & \bestcell{\gbest{45.22}} & \bestcell{\gbest{60.38}} & \bestcell{\gbest{83.28}}
& 20.60 & \bestcell{37.72} & \bestcell{46.65} & \bestcell{74.64}
& 18.60 & \bestcell{35.94} & \secondcell{{41.06}} & \bestcell{{67.06}} \\

\midrule

% ======================= PSE =======================
\multirow{6}{*}{PSE} & Random
& 21.80 & 41.44 & 51.06 & 71.28
& 27.60 & 51.18 & 66.83 & 86.92
& 22.40 & 39.82 & 47.35 & 73.58 \\

& Facility
& \bestcell{24.00} & 42.26 & 51.33 & 71.16
& \secondcell{\gsecond{31.80}} & 52.92 & 68.88 & 87.10
& \bestcell{\gbest{24.40}} & 40.60 & \secondcell{\gsecond{48.28}} & \secondcell{\gsecond{74.26}} \\

& FPS
& 22.60 & 41.80 & 51.58 & 71.32
& 30.80 & \secondcell{\gsecond{53.00}} & 69.59 & 87.66
& \secondcell{\gsecond{23.80}} & \secondcell{\gsecond{40.62}} & 48.26 & 73.92 \\

& KMedoids
& \secondcell{23.80} & \secondcell{42.32} & \secondcell{51.68} & \secondcell{71.54}
& \bestcell{\gbest{32.20}} & \bestcell{\gbest{53.64}} & \secondcell{\gsecond{69.73}} & \bestcell{\gbest{87.92}}
& 23.40 & \bestcell{\gbest{40.80}} & 48.16 & 73.92 \\

& Tightness
& 22.60 & 41.22 & 51.24 & 71.02
& 30.00 & 52.90 & {69.48} & 87.32
& 23.20 & 40.28 & \secondcell{\gsecond{48.28}} & 74.02 \\

& Hardness
& 23.00 & \bestcell{42.50} & \bestcell{52.17} & \bestcell{71.80}
& 31.20 & \secondcell{\gsecond{53.00}} & \bestcell{\gbest{69.84}} & \secondcell{\gsecond{87.82}}
& 22.40 & 40.26 & \bestcell{\gbest{48.32}} & \bestcell{\gbest{74.44}} \\

\midrule

% ======================= IPSE =======================
\multirow{6}{*}{IPSE} & Random
& 17.80 & 35.22 & 46.22 & 71.20
& 17.80 & 33.48 & 43.22 & 68.94
& 17.00 & 33.08 & 40.70 & 65.02 \\

& Facility
& 22.80 & 43.20 & 54.22 & 77.98
& 22.80 & 40.40 & 51.16 & 77.34
& \bestcell{23.60} & 39.06 & 44.70 & 70.16 \\

& FPS
& 24.20 & \secondcell{44.24} & \secondcell{56.25} & {77.42}
& \secondcell{25.20} & {42.30} & \secondcell{53.98} & {78.58}
& 22.20 & \secondcell{39.68} & \bestcell{46.04} & \bestcell{71.56} \\

& KMedoids
& \bestcell{\gbest{24.80}} & 43.82 & 55.40 & \secondcell{78.36}
& \bestcell{25.40} & 41.40 & 52.51 & 78.26
& \secondcell{23.40} & 38.60 & 45.43 & 70.18 \\

& Tightness
& 21.80 & 43.76 & 56.02 & 77.44
& 23.20 & \secondcell{42.50} & 53.88 & \secondcell{79.14}
& 22.00 & 39.60 & \secondcell{45.91} & 71.00 \\

& Hardness
& \secondcell{\gsecond{24.60}} & \bestcell{\gsecond{44.80}} & \bestcell{58.04} & \bestcell{79.26}
& {24.60} & \bestcell{42.60} & \bestcell{54.08} & \bestcell{79.18}
& 21.20 & \bestcell{39.74} & \bestcell{46.04} & \secondcell{71.06} \\

\bottomrule
\end{tabular}
\vspace{-0.4cm}
\end{table*}

\begin{table*}[t]
\centering
\small
\renewcommand{\arraystretch}{0.82}
\setlength{\tabcolsep}{4.2pt}
\caption{Embedding results on Chengdu. See the caption of Table~\ref{tab:geolife} for notation.}
\label{tab:chengdu}
\vspace{-0.2cm}
\label{tab:chengdu_all}
\begin{tabular}{cc|ccc|ccc|ccc}
\toprule
\multicolumn{2}{c|}{Method}
& \multicolumn{3}{c|}{Hausdorff}
& \multicolumn{3}{c|}{DFD}
& \multicolumn{3}{c}{DTW} \\
& & HR@5 & HR@10 & R10@50
  & HR@5 & HR@10 & R10@50
  & HR@5 & HR@10 & R10@50 \\
\midrule
\multicolumn{2}{c|}{t2vec~\cite{li2018deep}}
& 16.16 & 16.52 & 30.69
& 21.34 & 22.34 & 44.06
& 21.00 & 22.68 & 46.03 \\
\multicolumn{2}{c|}{TrjSR~\cite{cao2021accurate}}
& 10.44 & 12.09 & 25.44
& 11.08 & 12.77 & 25.96
& 16.42 & 18.32 & 36.63 \\
\multicolumn{2}{c|}{TrajCL~\cite{chang2023contrastive}}
& 22.14 & 27.04 & 61.93
& 19.00 & 21.42 & 51.79
& 23.48 & 27.20 & 59.89 \\
\multicolumn{2}{c|}{NeuTraj~\cite{yao2019computing}}
& 21.82 & 23.27 & 39.09
& 32.28 & 34.70 & 49.51
& 28.40 & 29.33 & 46.90 \\
\multicolumn{2}{c|}{Traj2SimVec~\cite{zhang2021trajectory}}
& 18.34 & 20.17 & 43.67
& 25.16 & 28.33 & 42.78
& 18.66 & 19.37 & 40.69 \\
\multicolumn{2}{c|}{TrajGAT~\cite{yao2022trajgat}}
& 19.98 & 25.26 & 57.57
& 16.24 & 19.11 & 49.18
& 23.96 & 28.32 & 65.41 \\
\multicolumn{2}{c|}{ConvTraj~\cite{chang2024revisiting}}
& 36.26 & 42.78 & 76.67
& 53.34 & 58.18 & 93.14
& 34.90 & 40.40 & 76.23 \\
\midrule

\multicolumn{2}{c|}{SE}
& 33.18 & 36.79 & 53.04
& 58.20 & 64.78 & 97.46
& 34.06 & 36.78 & 72.40 \\
\midrule

% ======================= PIVOT (block-wise background) =======================
\multirow{6}{*}{PIVOT} & Random
& 59.02 & 63.43 & 95.56
& 33.86 & 36.35 & 83.29
& 31.66 & 32.72 & 68.85 \\
& Facility
& 50.00 & 54.70 & 92.92
& 30.18 & 33.54 & 79.06
& 30.88 & 32.30 & 67.91 \\
& FPS
& 73.42 & 78.31 & 99.81
& 39.04 & 41.25 & 92.11
& \bestcell{33.04} & \bestcell{34.26} & \bestcell{70.74} \\
& KM
& 50.16 & 55.53 & 93.40
& 30.28 & 33.39 & 79.45
& 29.62 & 31.10 & 66.53 \\
& Tightness
& \secondcell{\underline{80.78}} & \bestcell{\textbf{\underline{85.63}}} & \bestcell{\textbf{\underline{99.95}}}
& \secondcell{40.46} & \secondcell{42.89} & \secondcell{94.39}
& \secondcell{32.44} & \secondcell{34.14} & 70.17 \\
& Hardness
& \bestcell{\textbf{\underline{80.88}}} & \secondcell{\underline{85.54}} & \secondcell{\underline{99.94}}
& \bestcell{40.70} & \bestcell{42.92} & \bestcell{94.45}
& 32.42 & 33.90 & \secondcell{70.28} \\
\midrule

% ======================= PSE (block-wise background) =======================
\multirow{6}{*}{PSE} & Random
& 41.70 & 43.17 & 55.11
& 83.48 & 86.16 & 99.52
& 42.10 & 43.84 & 77.12 \\
& Facility
& 40.78 & 42.83 & 55.04
& 80.76 & 84.37 & 99.53
& 41.48 & 43.68 & 76.94 \\
& FPS
& 42.38 & 43.82 & 55.55
& 87.66 & 90.07 & \bestcell{\textbf{\underline{100.00}}}
& \secondcell{\underline{42.74}} & 44.80 & 78.25 \\
& KM
& 40.52 & 42.68 & 55.01
& 79.66 & 83.08 & \secondcell{\underline{99.58}}
& 40.82 & 43.01 & 76.96 \\
& Tightness
& \secondcell{42.78} & \bestcell{44.15} & \bestcell{55.61}
& \secondcell{\underline{88.46}} & \secondcell{\underline{90.97}} & \bestcell{\textbf{\underline{100.00}}}
& 42.70 & \bestcell{\textbf{\underline{44.96}}} & \bestcell{\textbf{\underline{78.42}}} \\
& Hardness
& \bestcell{42.84} & \secondcell{43.99} & \secondcell{55.56}
& \bestcell{\textbf{\underline{88.82}}} & \bestcell{\textbf{\underline{91.24}}} & \bestcell{\textbf{\underline{100.00}}}
& \bestcell{\textbf{\underline{42.76}}} & \secondcell{\underline{44.87}} & \secondcell{\underline{78.37}} \\
\midrule

% ======================= IPSE (block-wise background) =======================
\multirow{6}{*}{IPSE} & Random
& 58.78 & 63.26 & 95.56
& 34.52 & 36.53 & 83.29
& 32.26 & 33.00 & 68.85 \\
& Facility
& 49.62 & 54.58 & 92.92
& 30.88 & 33.74 & 79.06
& 31.54 & 32.53 & 67.91 \\
& FPS
& 72.92 & 78.17 & 99.81
& 39.46 & 41.38 & 92.11
& \bestcell{33.50} & \bestcell{34.37} & \bestcell{70.74} \\
& KM
& 50.28 & 55.44 & 93.40
& 31.24 & 33.86 & 79.45
& 30.60 & 31.57 & 66.53 \\
& Tightness
& \secondcell{80.28} & \bestcell{85.47} & \bestcell{\textbf{\underline{99.95}}}
& \secondcell{40.90} & \secondcell{42.99} & \secondcell{94.39}
& \secondcell{32.92} & \secondcell{34.22} & 70.17 \\
& Hardness
& \bestcell{80.38} & \secondcell{85.40} & \secondcell{\underline{99.94}}
& \bestcell{41.08} & \bestcell{43.01} & \bestcell{94.45}
& 32.86 & 33.98 & \secondcell{70.28} \\
\bottomrule
\end{tabular}
\vspace{-0.2cm}
\end{table*}

\begin{table*}[t]
\centering
\small
\renewcommand{\arraystretch}{0.82}
\setlength{\tabcolsep}{3.2pt}
\caption{Embedding results on T-Drive. See the caption of Table~\ref{tab:geolife} for notation.}
\label{tab:tdrive}
\vspace{-0.2cm}
\label{tab:tdrive_all}
\begin{tabular}{cc|cccc|cccc|cccc}
\toprule
\multicolumn{2}{c|}{Method}
& \multicolumn{4}{c|}{Hausdorff}
& \multicolumn{4}{c|}{DFD}
& \multicolumn{4}{c}{DTW} \\
& & HR@1 & HR@10 & HR@50 & R10@50
  & HR@1 & HR@10 & HR@50 & R10@50
  & HR@1 & HR@10 & HR@50 & R10@50 \\
\midrule

\multicolumn{2}{c|}{NeuTraj}
& 13.34 & 31.04 & 48.57 & 68.57
& 26.30 & 44.94 & 57.10 & 83.40
& 10.18 & 22.71 & 35.64 & 49.31 \\
\multicolumn{2}{c|}{Traj2SimVec}
& 14.59 & 32.05 & 48.88 & 67.67
& 24.94 & 43.00 & 55.30 & 80.54
& 11.88 & 26.59 & 41.17 & 58.01 \\
\multicolumn{2}{c|}{T3S}
& 19.35 & 41.10 & 57.01 & 82.14
& 31.74 & 49.80 & 61.32 & 88.84
& 9.19 & 22.58 & 36.91 & 50.26 \\
\multicolumn{2}{c|}{TMN-NM}
& 21.92 & 44.42 & 60.41 & 86.23
& 32.43 & 50.85 & 62.14 & 89.88
& 9.83 & 23.34 & 37.57 & 52.48 \\

\multicolumn{2}{c|}{SIMFormer w/ Euc.}
& 36.97 & 57.08 & 68.31 & 95.43
& 34.52 & 52.50 & 63.35 & 91.26
& 11.05 & 24.36 & 38.33 & 52.89 \\
\multicolumn{2}{c|}{SIMFormer w/ Cos.}
& 34.40 & 53.46 & 64.15 & 92.84
& 28.47 & 45.19 & 55.47 & 84.68
& \textbf{\underline{19.88}} & \textbf{\underline{36.00}} & \textbf{\underline{52.56}} & \textbf{\underline{73.94}} \\
\multicolumn{2}{c|}{SIMFormer w/ Cheby.}
& 43.66 & 66.79 & 78.85 & 98.04
& 41.04 & 61.35 & 73.22 & 94.41
& 14.50 & \underline{29.16} & \underline{43.72} & \underline{58.80} \\

\midrule

\multicolumn{2}{c|}{SE}
& 5.87 & 12.63 & 20.32 & 25.83
& 12.67 & 26.38 & 40.26 & 70.66
& 8.04 & 16.41 & 24.48 & 42.16 \\

\midrule

% ======================= PIVOT (block-wise background) =======================
\multirow{6}{*}{PIVOT} & Random
& 31.81 & 47.41 & 57.11 & 86.26
& 9.97 & 18.14 & 26.82 & 53.69
& 9.91 & 17.36 & 25.93 & 43.34 \\
& Facility
& 49.76 & 68.32 & 78.64 & 96.38
& 13.47 & 22.59 & 32.20 & 66.56
& \secondcell{11.59} & \bestcell{19.88} & \bestcell{28.34} & \bestcell{47.62} \\
& FPS
& 52.56 & 71.68 & 82.70 & 99.32
& \secondcell{13.73} & \bestcell{23.01} & 32.93 & \secondcell{67.85}
& 11.57 & 19.76 & 28.08 & 47.39 \\
& KM
& 48.36 & 66.78 & 77.17 & 96.09
& 12.99 & 22.39 & 31.99 & 65.87
& 11.21 & \secondcell{19.84} & \secondcell{28.33} & 47.48 \\
& Tightness
& \secondcell{\underline{52.63}} & \secondcell{\underline{72.69}} & \bestcell{\textbf{\underline{83.82}}} & \bestcell{\textbf{\underline{99.57}}}
& \bestcell{13.76} & 22.78 & \secondcell{32.95} & 67.83
& \bestcell{11.76} & 19.77 & 28.17 & \secondcell{47.57} \\
& Hardness
& \bestcell{\textbf{\underline{53.44}}} & \bestcell{\textbf{\underline{72.78}}} & \secondcell{\underline{83.68}} & \secondcell{\underline{99.44}}
& 13.61 & \secondcell{22.96} & \bestcell{33.01} & \bestcell{68.15}
& 11.54 & 19.75 & 28.19 & 47.52 \\

\midrule

% ======================= PSE (block-wise background) =======================
\multirow{6}{*}{PSE} & Random
& 14.40 & 23.04 & 31.72 & 35.57
& 44.63 & 61.56 & 71.90 & 93.97
& 17.47 & 27.33 & 34.19 & 54.84 \\
& Facility
& 16.37 & 24.84 & 34.07 & 37.74
& \secondcell{\underline{51.83}} & 69.42 & 79.44 & 97.42
& \bestcell{\underline{18.97}} & \bestcell{28.72} & 35.51 & 56.68 \\
& FPS
& 16.46 & \secondcell{25.21} & 34.58 & 38.14
& 51.36 & 69.60 & 80.35 & \bestcell{\textbf{\underline{98.02}}}
& 18.74 & 28.63 & 35.54 & 56.82 \\
& KM
& 16.13 & 24.73 & 33.93 & 37.63
& 51.34 & 68.89 & 79.01 & 97.38
& 18.73 & 28.60 & 35.46 & 56.62 \\
& Tightness
& \bestcell{16.76} & \bestcell{25.30} & \bestcell{34.70} & \bestcell{38.25}
& 51.39 & \secondcell{\underline{69.71}} & \secondcell{\underline{80.66}} & \secondcell{\underline{97.99}}
& 18.67 & 28.66 & \secondcell{35.63} & \secondcell{56.88} \\
& Hardness
& \secondcell{16.60} & 25.19 & \secondcell{34.60} & \secondcell{38.19}
& \bestcell{\textbf{\underline{52.04}}} & \bestcell{\textbf{\underline{70.06}}} & \bestcell{\textbf{\underline{80.79}}} & \bestcell{\textbf{\underline{98.02}}}
& \secondcell{18.80} & \secondcell{28.70} & \bestcell{35.66} & \bestcell{56.91} \\

\midrule

% ======================= IPSE (block-wise background) =======================
\multirow{6}{*}{IPSE} & Random
& \bestcell{20.29} & \bestcell{33.92} & \bestcell{45.32} & \bestcell{50.68}
& 29.54 & 41.93 & 49.63 & 83.38
& \bestcell{17.69} & \bestcell{26.80} & \bestcell{33.61} & \bestcell{54.17} \\
& Facility
& 12.61 & 20.20 & 27.85 & 33.05
& \secondcell{38.07} & \secondcell{52.96} & \secondcell{60.68} & 87.24
& 16.07 & 24.79 & 31.13 & 50.46 \\
& FPS
& 11.29 & 19.12 & 26.44 & 31.89
& 34.06 & 48.87 & 56.89 & 83.62
& 14.80 & 23.70 & 29.95 & 48.82 \\
& KM
& \secondcell{13.43} & \secondcell{21.26} & \secondcell{29.24} & \secondcell{34.26}
& \bestcell{39.40} & \bestcell{55.18} & \bestcell{63.34} & \bestcell{90.69}
& \secondcell{16.53} & \secondcell{25.62} & \secondcell{32.11} & \secondcell{52.09} \\
& Tightness
& 12.19 & 20.20 & 27.76 & 33.02
& 36.19 & 52.15 & 60.34 & \secondcell{87.73}
& 15.39 & 24.41 & 30.60 & 49.80 \\
& Hardness
& 11.99 & 20.16 & 27.70 & 32.94
& 36.39 & 52.25 & 60.36 & 87.71
& 15.74 & 24.71 & 31.10 & 50.60 \\

\bottomrule
\end{tabular}
\vspace{-0.2cm}
\end{table*}

\begin{table*}[t]
\centering
\small
\renewcommand{\arraystretch}{0.82}
\setlength{\tabcolsep}{3.2pt}        % 列间距（你现在 4.0pt），可降到 2.8~3.5pt
\caption{Embedding results on AIS. See the caption of Table~\ref{tab:geolife} for notation.}
\label{tab:ais}
\vspace{-0.2cm}
\label{tab:ais_all}
\begin{tabular}{cc|cccc|cccc|cccc}
\toprule
\multicolumn{2}{c|}{Method}
& \multicolumn{4}{c|}{Hausdorff}
& \multicolumn{4}{c|}{DFD}
& \multicolumn{4}{c}{DTW} \\
& & HR@1 & HR@10 & HR@50 & R10@50
  & HR@1 & HR@10 & HR@50 & R10@50
  & HR@1 & HR@10 & HR@50 & R10@50 \\
\midrule

% ===== baselines (UPDATED to match your Table 3) =====
\multicolumn{2}{c|}{NeuTraj}
& 15.14 & 32.29 & 49.99 & 66.14
& 29.01 & 51.26 & 65.81 & 85.57
& 16.18 & 33.77 & 49.26 & 69.12 \\
\multicolumn{2}{c|}{Traj2SimVec}
& 22.25 & 40.38 & 57.26 & 72.40
& 36.43 & 59.16 & 71.50 & 91.26
& 18.13 & 36.04 & 50.46 & 71.84 \\
\multicolumn{2}{c|}{T3S}
& 11.49 & 30.12 & 51.54 & 67.10
& 32.92 & 56.64 & 71.03 & 89.59
& 14.87 & 32.98 & 50.05 & 69.01 \\
\multicolumn{2}{c|}{TMN-NM}
& 24.65 & 47.99 & 66.48 & 84.96
& 37.58 & 60.13 & 72.74 & 92.72
& 15.41 & 33.26 & 49.51 & 68.61 \\
\multicolumn{2}{c|}{SIMFormer w/ Euc.}
& 31.66 & 56.11 & 72.32 & 91.72
& 37.30 & 59.69 & 72.51 & 92.49
& 15.89 & 33.08 & 48.83 & 68.08 \\
\multicolumn{2}{c|}{SIMFormer w/ Cos.}
& 24.03 & 45.31 & 63.13 & 82.11
& 27.73 & 48.70 & 64.17 & 85.21
& 21.73 & 42.15 & 60.05 & 78.16 \\
\multicolumn{2}{c|}{SIMFormer w/ Cheby.}
& 35.58 & 59.68 & 75.88 & 92.76
& 39.52 & 62.83 & 76.75 & 93.26
& 17.66 & 35.58 & 51.26 & 70.48 \\

\midrule

% ===== SE (from AIS excel) =====
\multicolumn{2}{c|}{SE}
& 15.30 & 25.40 & 32.19 & 42.56
& 26.71 & 43.90 & 56.26 & 73.25
& 17.16 & 33.06 & 45.08 & 67.40 \\

\midrule

% ======================= PIVOT (block-wise background, column-wise inside block) =======================
\multirow{6}{*}{PIVOT} & Random
& 35.83 & 54.48 & 66.95 & 91.15
& 20.30 & 33.70 & 41.75 & 75.98
& 19.01 & 32.75 & 41.66 & 70.69 \\
& Facility
& 52.39 & 70.79 & 82.87 & 97.98
& 26.63 & 39.38 & 47.07 & 85.40
& 21.86 & 36.07 & 44.80 & 76.39 \\
& FPS
& \secondcell{\underline{57.96}} & 76.27 & 86.74 & 98.89
& \secondcell{28.97} & {41.50} & 48.30 & 86.96
& \bestcell{22.63} & 36.64 & 45.32 & 77.07 \\
& KM
& 50.76 & 70.43 & 82.33 & 97.83
& 26.37 & 39.26 & 46.98 & 85.28
& 21.43 & 35.99 & 44.81 & 76.39 \\
& Tightness
& 57.91 & \secondcell{\underline{77.29}} & \secondcell{\underline{88.17}} & \secondcell{\underline{99.14}}
& 28.71 & \secondcell{41.74} & \bestcell{48.71} & \secondcell{87.36}
& \secondcell{22.50} & \bestcell{36.94} & \bestcell{45.54} & \bestcell{77.29} \\
& Hardness
& \bestcell{\textbf{\underline{59.59}}} & \bestcell{\textbf{\underline{78.12}}} & \bestcell{\textbf{\underline{88.77}}} & \bestcell{\textbf{\underline{99.33}}}
& \bestcell{29.20} & \bestcell{41.89} & \secondcell{48.58} & \bestcell{87.44}
& 22.49 & \secondcell{36.71} & \secondcell{45.46} & \secondcell{77.24} \\

\midrule

% ======================= PSE (block-wise background, column-wise inside block) =======================
\multirow{6}{*}{PSE} & Random
& 34.81 & 45.76 & 49.75 & 59.94
& 53.10 & 73.32 & 83.79 & 96.79
& 33.83 & 52.09 & 62.20 & 87.41 \\
& Facility
& 40.10 & 49.07 & 52.43 & 61.13
& {61.11} & 80.16 & 89.76 & 98.45
& 35.61 & \secondcell{\underline{53.92}} & 63.24 & \secondcell{\underline{88.52}} \\
& FPS
& \bestcell{40.97} & \bestcell{49.45} & \bestcell{52.76} & \bestcell{61.30}
& \bestcell{\textbf{\underline{62.09}}} & \bestcell{\textbf{\underline{80.74}}} & \secondcell{\underline{90.30}} & \secondcell{\underline{98.50}}
& \bestcell{\textbf{\underline{36.04}}} & \bestcell{\textbf{\underline{53.96}}} & \bestcell{\textbf{\underline{63.30}}} & 88.51 \\
& KM
& 39.70 & 48.86 & 52.35 & 61.10
& 60.37 & 79.51 & 89.37 & 98.40
& 35.54 & 53.68 & 63.20 & 88.45 \\
& Tightness
& 40.21 & \secondcell{49.30} & 52.63 & \secondcell{61.25}
& \secondcell{\underline{61.71}} & {80.64} & 90.10 & \secondcell{\underline{98.50}}
& \secondcell{\underline{35.91}} & 53.86 & \secondcell{\underline{63.28}} & \bestcell{\textbf{\underline{88.55}}} \\
& Hardness
& \secondcell{40.73} & 49.21 & \secondcell{52.73} & 61.24
& 61.59 & \secondcell{\underline{80.65}} & \bestcell{\textbf{\underline{90.38}}} & \bestcell{\textbf{\underline{98.51}}}
& 35.56 & 53.75 & \secondcell{\underline{63.28}} & 88.50 \\

\midrule

% ======================= IPSE (block-wise background, column-wise inside block) =======================
\multirow{6}{*}{IPSE} & Random
& 32.61 & 45.91 & \bestcell{52.57} & \bestcell{63.56}
& 42.40 & 60.99 & 68.27 & 93.49
& 32.46 & 49.29 & 58.66 & 84.74 \\
& Facility
& \bestcell{37.33} & \secondcell{47.56} & 50.95 & 60.35
& \secondcell{53.99} & \bestcell{73.04} & 81.45 & 97.26
& \secondcell{33.97} & \bestcell{52.46} & \secondcell{61.56} & \secondcell{87.37} \\
& FPS
& 36.70 & 46.27 & 49.82 & 59.75
& 53.19 & 72.23 & 81.71 & \secondcell{97.44}
& 33.76 & 51.73 & 61.24 & 86.94 \\
& KM
& \bestcell{37.33} & \bestcell{47.59} & \secondcell{51.17} & \secondcell{60.56}
& \bestcell{54.06} & 72.73 & 81.20 & 97.34
& \bestcell{34.01} & \secondcell{52.34} & \bestcell{61.58} & \bestcell{87.42} \\
& Tightness
& 36.77 & 46.86 & 50.30 & 60.19
& 53.63 & \secondcell{72.89} & \secondcell{82.00} & \bestcell{97.58}
& 33.90 & 51.81 & 61.29 & 87.01 \\
& Hardness
& \secondcell{37.07} & 46.74 & 50.26 & 59.93
& 53.03 & 72.87 & \bestcell{82.13} & \bestcell{97.58}
& 33.70 & 51.58 & 61.14 & 86.92 \\

\bottomrule
\end{tabular}
\vspace{-0.3cm}
\end{table*}

\noindent \textbf{Baselines.}
We compare the proposed lower-bound representations with state-of-the-art learning-based trajectory similarity methods covered in the ConvTraj~\cite{chang2024revisiting} and SIMformer~\cite{yang2024simformer} studies, including t2vec~\cite{li2018deep}, TrjSR~\cite{cao2021accurate}, NeuTraj~\cite{yao2019computing}, Traj2SimVec~\cite{zhang2021trajectory}, TrajGAT~\cite{yao2022trajgat}, TrajCL~\cite{chang2023contrastive}, ConvTraj~\cite{chang2024revisiting}, T3S~\cite{yang2021t3s}, TMN-NM~\cite{yang2022tmn}, and SIMformer~\cite{yang2024simformer}. For these baselines, we report the results from the ConvTraj and SIMformer studies directly under their reported experimental settings. 
{\color{black}{Since the experiments are conducted on the same preprocessed datasets, data splits, and evaluation protocols as those studies, the reported results remain directly comparable.}}
This is common practice in recent trajectory similarity learning studies~\cite{chang2024revisiting} and avoids the re-training of complex neural models with potentially different hyperparameter configurations, which may lead to unfair comparisons.

In addition to the above learning-based baselines, we compare different pivot selection methods for constructing lower-bound representations. Specifically, we consider several widely used heuristic pivot selection methods~\cite{zhu2022pivot,chen2015efficient}, including random selection, facility location based selection~\cite{krause2008facility}, farthest point sampling (FPS)~\cite{gonzalez1985clustering}, and $k$-medoids clustering~\cite{kaufman1990finding}. These methods have been used for metric indexing and similarity search to improve pivot coverage or dispersion, but they do not explicitly optimize ranking quality. Our proposed tightness- and hard-aware pivot selection methods serve as learning-based alternatives under the same framework.

\noindent \textbf{Evaluation protocol and metrics.}
% Following prior work, we evaluate retrieval effectiveness using ranking-based metrics, i.e., Hit Ratio at $k$ (HR@$k$). For each query trajectory, candidate trajectories are ranked according to the representation distance, and the resulting ranking is compared against the ground-truth ranking induced by the original trajectory distance. Since the primary goal of trajectory similarity learning is efficient retrieval rather than exact distance reconstruction, we focus exclusively on ranking quality.
{\color{black}{Following the standard evaluation protocol adopted in trajectory similarity learning~\cite{chang2024revisiting}, }}
we evaluate retrieval effectiveness using ranking-based metrics, including Hit Ratio at $k$ (HR@$k$) and Recall@50 with top-10 ground-truth neighbors (R10@50)~\cite{deng2024learning}. For each query trajectory, candidate trajectories are ranked according to the representation distance and are compared against the ground-truth ranking induced by the original trajectory distance. 
% Since our goal is efficient trajectory retrieval, we focus exclusively on ranking quality.
Ground-truth similarity rankings are induced by the considered trajectory distance measure (i.e., Hausdorff distance, DFD, or DTW), and they are used only for evaluation and for pivot selection during training. 
% {\color{black}{This follows the standard evaluation protocol adopted by existing trajectory similarity learning methods~\cite{chang2024revisiting}.}}

\noindent \textbf{Implementation details.}
% Pivot selection is performed offline using sampled trajectory pairs from the training set. Once the pivot set is fixed, all database trajectories are encoded into fixed-length vectors and stored. At query time, only the query trajectory is encoded, and similarity computation is conducted using the corresponding lower-bound–based representation distance. All experiments are conducted under the same dataset settings and evaluation protocols as the compared methods.
Pivot selection is performed offline using sampled trajectory pairs from the training set. Unless stated otherwise, the pivot budget and the number of sampled trajectory pairs are set to $k=32$ and $|\mathcal{S}| = 512$, respectively. For hard-aware pivot selection, hard pairs are constructed by selecting the top-$m$ nearest neighbors under the ground-truth distance for each anchor trajectory, with $m=50$.
% Once the pivot set is fixed, all database trajectories are encoded into fixed-length vectors and stored. At query time, only the query trajectory is encoded, and similarity computation is conducted using the corresponding lower-bound--based representation distance.

\subsection{Effectiveness}
\label{sec:effectiveness}

% \subsection{Overall Performance}
\noindent \underline{\textbf{Overall Performance.}} 
% We evaluate the overall ranking performance of the proposed lower-bound representations against state-of-the-art learning-based trajectory embeddings. Results on five real-world datasets are reported in Tables~2--6. Overall, our methods consistently outperform neural baselines across most dataset--distance combinations. In particular, on Hausdorff and DFD distances, the proposed representations improve HR@1 by approximately 20\%--60\% over the strongest learning-based baselines, while achieving comparable or better gains on HR@10 and HR@50. 
% Under DTW, where several neural methods suffer from severe performance degradation, our approach remains stable and effective. Across datasets, PSE improves HR@50 by roughly 15\%--40\% compared to neural baselines in most settings. On the large-scale Porto dataset, despite the presence of over 1.5 million trajectories, our methods continue to deliver strong ranking performance, with consistent improvements of more than 30\% on key metrics such as HR@1 and R10@50. These results demonstrate that lower-bound representations provide a robust and competitive alternative to learning-based trajectory embeddings across diverse distance measures and data scales.
% Notably, even the extremely simple SE baseline achieves surprisingly strong performance in several settings, indicating that coarse geometric lower-bound cues already capture a substantial portion of trajectory similarity.
We evaluate the overall ranking performance of the proposed lower-bound representations against state-of-the-art trajectory embeddings. Results on the five datasets are reported in Tables~\ref{tab:geolife}--\ref{tab:ais}. Overall, our methods outperform the baselines across most dataset--distance combinations. In particular, on the Hausdorff and DFD distances, the proposed representations improve HR@1 by approximately 20\%--60\% over the strongest learning-based baselines, while achieving better gains on HR@10 and HR@50. Under DTW, where several baseline methods suffer from severe performance degradation, our approach remains stable and effective.
Across the datasets, PSE improves HR@50 by roughly 15\%--40\% compared to the baselines in most settings. On the large-scale Porto dataset, our methods continue to deliver strong ranking performance, achieving improvements of 30+\% on key metrics such as HR@1 and R10@50. These results demonstrate that lower-bound representations provide a robust and competitive alternative to learning-based trajectory embeddings across diverse distance measures and data volumes. Notably, even the extremely simple SE baseline achieves surprisingly strong performance in several settings, indicating that coarse geometric lower-bound cues already capture a substantial portion of trajectory similarity.
Furthermore, we observe that the proposed lower-bound representations achieve competitive performance across all three trajectory distance measures, despite their substantially different matching semantics.
% DTW focuses on alignment-based matching, DFD emphasizes the maximum deviation along an optimal traversal, and Hausdorff measures the worst-case point-wise discrepancy between trajectories.
{\color{black}{An additional observation is that the proposed framework separates representation construction from distance evaluation.
Distance-independent pivot selection strategies such as Random and FPS remain competitive across DTW, DFD, and Hausdorff distance, suggesting that the learned representations capture geometric characteristics useful across multiple trajectory similarity definitions.
Distance-aware pivot selection can further improve retrieval quality when a target distance is known in advance.}}

\noindent \underline{\textbf{Ablation Study.}}
We further analyze the impact of the different representation instantiations and pivot selection strategies. Comparing SE, PIVOT, PSE, and IPSE in Tables~\ref{tab:geolife}--\ref{tab:ais}, we observe clear distance-dependent behaviors. PIVOT consistently achieves the best performance under Hausdorff distance, yielding improvements of approximately 10\%--25\% over PSE and IPSE. 
This behavior can be attributed to the max--min structure of Hausdorff distance, which is naturally aligned with pivot-based lower bounds.
{\color{black}{In contrast, the endpoint-based component does not constitute a valid lower bound for Hausdorff distance. As a result, instantiations that incorporate endpoint information (i.e., SE, PSE, and IPSE) tend to exhibit weaker performance, confirming the importance of respecting distance-specific lower-bound validity.}}
% Moreover, for alignment-based distances such as DFD and DTW, PSE performs best in most cases, improving HR@10 and HR@50 by around 5\%--15\% over PIVOT, as the combination of endpoint and pivot-based lower bounds preserves alignment-sensitive rankings better. IPSE achieves comparable but slightly less stable performance in some settings.
{\color{black}{Next, for alignment-based distances such as DFD and DTW, PSE consistently outperforms both SE and PIVOT across all datasets and evaluation metrics, improving HR@10 and HR@50 by around 5\%--15\% over PIVOT. This behavior is consistent with the construction of PSE. Since both SE and PIVOT constitute valid lower bounds for DFD and DTW, the max-combination used in PSE produces a lower bound that is never weaker than either individual component. Consequently, PSE is able to preserve trajectory similarity relationships more effectively than is either SE or PIVOT alone. IPSE achieves comparable but slightly less stable performance in some settings.}}
We also evaluate different pivot selection strategies. Learned pivot selection consistently outperforms the random and heuristic baselines, with the tightness- and hardness-aware strategies providing additional gains of 5\%--20\% on the top-$k$ metrics in most cases. Notably, hardness-aware selection is particularly effective for HR@1 and HR@10, confirming the importance of focusing on hard, near-neighbor trajectory pairs during pivot learning.

% \noindent \underline{\textbf{Ranking Quality.}} Number of Inversions / Spearman Correlations

\begin{figure}[t]
    \centering
    \includegraphics[width=\columnwidth]{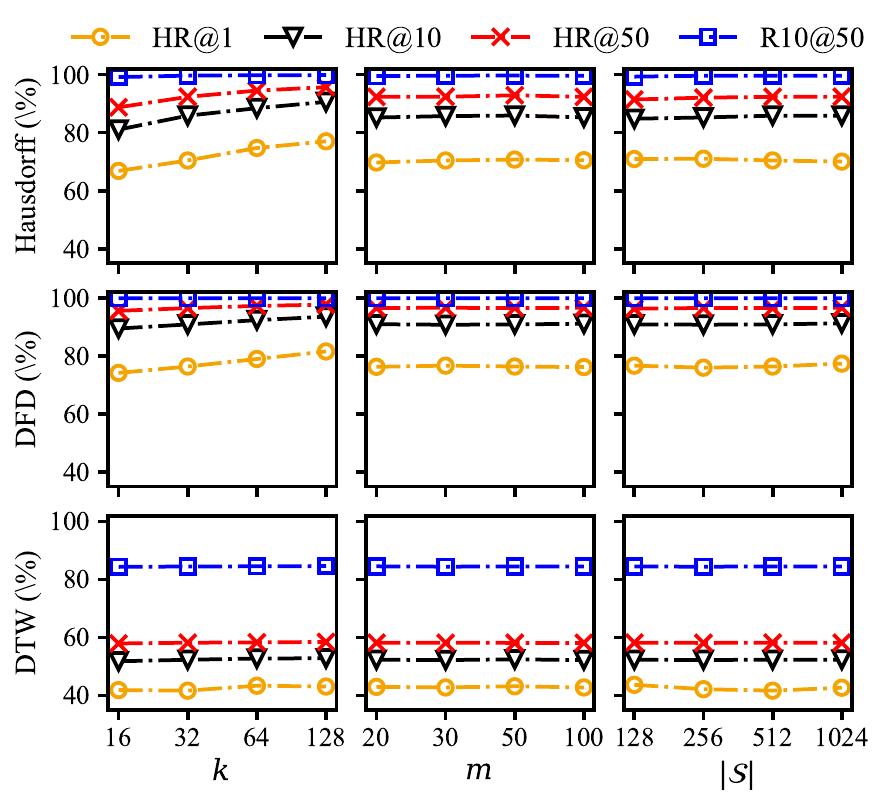}
    \vspace{-0.7cm}
    \caption{Hyper-parameter study on Geolife.}
    \label{fig:sensitivity-geolife}
    \vspace{-0.5cm}
\end{figure}

\subsection{Sensitivity Study}
We study the sensitivity of our framework to key hyper-parameters on the Geolife dataset, with results shown in Fig.~\ref{fig:sensitivity-geolife}. Following the best-performing configuration in Section~\ref{sec:effectiveness}, we use \textsc{PIVOT} for Hausdorff distance and \textsc{PSE} for DFD/DTW, and we adopt the hard-aware pivot selection strategy throughout. We vary one parameter at a time while keeping the others fixed (default \(k=32\), \(m=50\), and \(|\mathcal{S}|=512\)).

\noindent \textbf{Effect of pivot budget \(k\).} Increasing \(k\) consistently improves ranking quality for all three distances. The improvement is most pronounced for the Hausdorff distance, where larger pivot sets provide better spatial coverage and thus stronger discrimination. For DFD and DTW, the gains are still monotonic but saturate quickly. The performance becomes relatively stable once \(k\) reaches a moderate value (e.g., \(k=32\) or \(64\)), indicating that a small set of informative pivots captures most ranking-relevant information.

\noindent \textbf{Effect of hardness parameter \(m\).} Varying \(m\) (the number of near neighbors used to construct hard pairs) has a noticeably smaller impact than \(k\). We observe mild, consistent improvements when increasing \(m\), especially on Hausdorff and DFD, while DTW remains largely stable. This suggests that hard-aware pivot learning is not overly sensitive to the exact choice of \(m\) as long as it focuses on sufficiently challenging near-neighbor pairs.

\noindent \textbf{Effect of training sample size \(|\mathcal{S}|\).} Increasing \(|\mathcal{S}|\) yields small but consistent improvements at early stages and then quickly plateaus across all distances. This indicates that the pivot learning objective can be estimated reliably using a moderate number of training pairs, with larger sample sizes achieving diminishing returns. Overall, Fig.~\ref{fig:sensitivity-geolife} supports the use of compact default settings (e.g., \(k=32\), \(m=50\), and \(|\mathcal{S}|=512\)), which achieve near-saturated ranking performance.

% Conducted on the Geolife dataset.

% Hausdorff using PIVOT method
% DTW and Frechet using PSE method

% \noindent \underline{\textbf{The effect of $k$.}} 
% 16 32 64 128 Hausdorff@Geolife DFD@Geolife

% \noindent \underline{\textbf{The effect of $m$.}} 
% 10, 20, 50, 100 Hausdorff@Geolife DFD@Geolife

% \noindent \underline{\textbf{The effect of $|\mathcal{S}|$.}} 
% 500, 1000, 2000, 3000, 4000 Hausdorff@Geolife DFD@Geolife
% 128, 256, 512, 1024 

\subsection{Efficiency}

\begin{figure*}[t]
  \centering
  \includegraphics[width=0.9\textwidth]{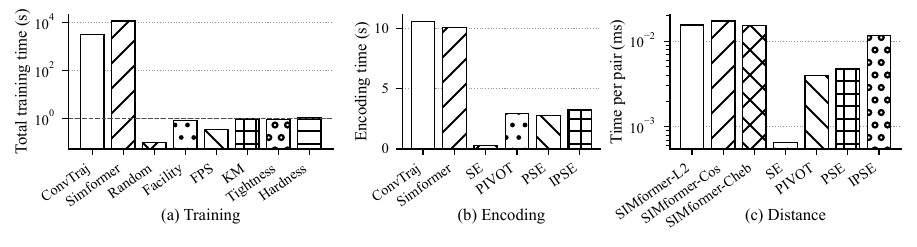}
  \vspace{-0.4cm}
  \caption{System-level efficiency: (a) training cost, (b) offline encoding time, and (c) online distance computation time per pair.}
  \label{fig:efficiency}
  \vspace{-0.3cm}
\end{figure*}

We evaluate the efficiency of the two strongest learning-based trajectory similarity methods, \textbf{ConvTraj} and \textbf{SIMformer}. Fig.~\ref{fig:efficiency} reports the training cost, offline encoding time, and online distance computation time under a unified and fair setting. All experiments are conducted on an Intel(R) Xeon(R) CPU E5-2650 v4 @ 2.20GHz with 128~GB RAM and a GeForce RTX 3080 GPU. 

For training, all learning-based methods are implemented and trained on a GPU, and both ConvTraj and SIMformer encode trajectories into 128-dimensional vectors following their original configurations. We report the total training time required to reach convergence (i.e., $200$ and $500$ training epochs for ConvTraj and SIMformer, respectively). For offline encoding, we measure the time to encode all candidate trajectories in the Porto dataset, which is also performed on a GPU to ensure consistency with the training setup. For online similarity computation, we evaluate the per-pair distance computation time by disabling parallel acceleration and randomly sampling 10{,}000 trajectory pairs, reporting the average cost per pair. Since SIMformer supports multiple vector distance functions, we report results for all distance metrics proposed in the original study. All online distance computations are conducted on a CPU, reflecting a realistic deployment scenario. 

% Overall, the results show that while learning-based methods achieve strong effectiveness, they incur substantial computational overhead during training and encoding, and their online efficiency depends heavily on the choice of representation distance, whereas the proposed lower-bound representations enable lightweight encoding and fast CPU-based similarity computation.
Overall, the results show that while the neural-based methods achieve strong effectiveness, they incur substantial computational overhead during training and encoding, and their online efficiency depends heavily on the choice of representation distance, whereas the proposed lower-bound representations enable lightweight encoding and fast CPU-based similarity computation. 
% In particular, our approach achieves up to an order-of-magnitude reduction in offline encoding time and more than an order-of-magnitude speedup in per-pair online distance computation compared with ConvTraj and SIMformer under the same 128-dimensional setting. Such significant efficiency gains translate directly into lower query latency and higher system throughput, making the proposed representations especially suitable for large-scale trajectory retrieval scenarios with millions of candidates.
In particular, 
% compared with ConvTraj and SIMformer,
% under the same 128-dimensional setting, 
\paramark{compared with ConvTraj and SIMformer, the proposed approach eliminates expensive iterative neural training and reduces offline training time by over 2,000$\times$. In addition, it reduces the offline trajectory encoding time by 3$\times$--35$\times$ and accelerates per-pair online distance computation by 3$\times$--24$\times$, depending on the representation instantiation.}
These efficiency improvements lead to lower query latency and higher system throughput, thus representing a real practical advantage of the proposed methods in large-scale settings.

\section{Related Work}
\label{sec:related_works}

% \subsection{Trajectory Similarity Learning}
\noindent \textbf{Trajectory Similarity Learning.}
The high computational cost of classical trajectory
distance measures such as DTW, Hausdorff distance, and DFD motivate trajectory similarity learning (TSL). Computing the classical distances typically involves expensive alignment or pointwise matching procedures~\cite{koide2020fast,chen2004marriage,chen2005robust,wang2018torch,yuan2019distributed,Deng2025ExactAE,deng2022efficient}, making them costly for large-scale analysis and learning.
Therefore, recent studies explore learning-based approaches that approximate trajectory similarity by mapping trajectories to fixed-length vector representations. Early sequence encoder based methods such as t2vec~\cite{li2018deep} and NeuTraj~\cite{yao2019computing} employ recurrent architectures to model trajectories as point sequences and learn embeddings that preserve similarity under different distances. ConvTraj~\cite{chang2024revisiting} revisits convolutional neural networks for trajectory similarity learning, finding that carefully designed CNN encoders, together with appropriate training objectives can achieve strong performance without relying on complex architectures. More recently, SIMformer~\cite{yang2024simformer} shows that a lightweight, single-layer Transformer encoder is sufficient for modeling free-space trajectory similarity, and also highlights the importance of tailoring representation similarity functions beyond Euclidean distance to mitigate the curse of dimensionality and improve ranking quality. Despite differences in model design, these approaches all adhere to an encoder-based paradigm, where trajectories are represented as vectors and compared using simple similarity measures.
While effective, most learning-based methods rely on complex training pipelines and distance-specific supervision, which limit their robustness and reuse across different distance measures. In contrast, the framework proposed here focuses on representation-level learning based on analytical lower bounds, providing a lightweight and distance-agnostic alternative for trajectory similarity modeling.

% \subsection{Pivot-Based Methods and Pivot Selection}
\noindent \textbf{Pivot-Based Methods and Pivot Selection.}
Pivot-based techniques have been studied extensively for accelerating similarity search in metric spaces. By precomputing distances between data objects and a small set of pivots, these methods exploit the triangle inequality to derive lower and upper bounds for pruning expensive distance computations. The effectiveness of such techniques depends heavily on the quality of the selected pivots.
Different pivot selection methods have been proposed to improve the effectiveness of pivot-based lower bounds in metric spaces. Early approaches focus on selecting well-dispersed pivots, such as random selection and farthest-first traversal (FFT)~\cite{gonzalez1985clustering}, motivated by better space coverage. Other methods exploit pivot--object distance distributions and select pivots that induce informative distance profiles, including variance- and correlation-based strategies~\cite{venkateswaran2008reference,van2011selecting}. 
% More recent approaches consider object--object distance relationships and select pivots that strengthen lower bounds over sampled object pairs, thereby improving pruning effectiveness~\cite{bustos2003pivot}. 
{\color{black}{A complementary line of work considers object--object distance relationships and selects pivots that directly strengthen lower bounds over sampled object pairs, thereby improving pruning effectiveness~\cite{bustos2003pivot}.}}
These methods are studied primarily in the context of metric indexing. We refer to a survey for additional details~\cite{zhu2022pivot}.
In contrast, our study revisits pivot selection from a representation learning perspective, optimizing pivots to construct compact lower-bound based representations that directly support ranking-oriented trajectory similarity search.

\section{Conclusion and Future Work}
\label{sec:conclusion}

% \section{Conclusion and Future Work}
We propose a pivot-based lower-bound representation framework for trajectory similarity learning. By using \emph{point}-based pivots and encoding a trajectory as a single fixed-length vector, our method provides fast similarity computation and supports multiple trajectory distances, including Hausdorff distance, DFD, and DTW, in a unified framework. We further propose tightness- and hardness-aware pivot selection strategies to improve ranking quality by learning informative pivots offline. Experiments on multiple real-world datasets show that our lower-bound representations are capable of strong and robust ranking performance, often outperforming state-of-the-art neural embeddings while remaining simple and efficient.
% There are several directions for future work. First, our framework can be extended to additional trajectory similarity measures beyond the three distances studied in this work. Second, the current representations rely only on spatial information; incorporating temporal signals (e.g., timestamps or speed profiles) may further improve ranking accuracy. Third, adapting the approach to road-network trajectories and network-constrained similarity measures is a promising direction for mobility applications.
% As future work, we plan to extend the proposed framework to more expressive trajectory similarity settings beyond free-space spatial distances. In particular, incorporating additional constraints and semantics, such as temporal information or road-network structures, into the lower-bound representation is a promising direction. Another important direction is to explore the integration of the proposed representations with indexing and filtering techniques to further improve scalability in large-scale trajectory retrieval systems.
% As future work, we plan to extend the proposed framework to more expressive trajectory similarity settings beyond free-space spatial distances. In particular, incorporating additional constraints and semantics, such as temporal information or road-network structures, into the lower-bound representation is a promising direction. Another important direction is to explore the integration of the proposed representations with indexing and filtering techniques to further improve scalability in large-scale trajectory retrieval systems.
As future work, one direction is to extend the proposed framework to more expressive trajectory similarity settings beyond free-space spatial distances, by incorporating additional constraints and semantics such as temporal information or road-networks. Another direction is to explore hybrid models that combine lower-bound representations with neural trajectory embeddings, combining the efficiency and robustness of analytical lower bounds with the representation capacity of learning-based methods, potentially achieving performance beyond either approach alone.

\clearpage
\balance
\bibliographystyle{ACM-Reference-Format}
\bibliography{sample}

\clearpage
\appendix

\section{Additional Theoretical Analysis}

\subsection{Proof of Lemma 3.2}
% \begin{lemma}[Trajectory-level lower bound induced by point pivots]
\emph{Lemma 3.2 (Trajectory-level lower bound induced by point pivots).
\label{lem:pointpivot_lb}
Let $p\in\mathbb{R}^2$ be a pivot point and define the pivot response of a trajectory $T$ as $\phi_p(T)=\min_{x\in T}\|x-p\|=\min_{x\in T} g_p(x)$.
Consider any trajectory distance $D(\cdot,\cdot)$ that \emph{dominates pointwise distances} in the sense that
for any two trajectories $T,S$:
\begin{equation}
% \label{eq:dominates_pointwise_distances}
\forall x\in T:\ \min_{y\in S}\|x-y\|\le D(T,S),
\ \ \
\forall y\in S:\ \min_{x\in T}\|x-y\|\le D(T,S)
\end{equation}
Then for any trajectories $T$ and $S$, the following lower bound holds:
\begin{equation}
% \label{eq:holds_lower_bound}
|\phi_p(T)-\phi_p(S)| \le D(T,S)
\end{equation}
% \end{lemma}
}

\begin{proof}
Fix $p\in\mathbb{R}^2$ and let $g_p(z)=\|z-p\|$ as in Lemma~\ref{lem:lipschitz}.
Let $x^\star = \arg\min_{x\in T} g_p(x)$, and $y^\star = \arg\min_{y\in S} g_p(y)$, so that $\phi_p(T)=g_p(x^\star)$ and $\phi_p(S)=g_p(y^\star)$.
We first upper bound $\phi_p(T)-\phi_p(S)$.
For any $y\in S$, by the triangle inequality, we have:
\begin{equation}
g_p(x^\star)=\|x^\star-p\| \le \|x^\star-y\|+\|y-p\|=\|x^\star-y\|+g_p(y).
\end{equation}
Rearranging terms yields $g_p(x^\star)-g_p(y)\le \|x^\star-y\|$
% \begin{equation}
% g_p(x^\star)-g_p(y)\le \|x^\star-y\|.
% \end{equation}
Taking the minimum over $y\in S$ on both sides gives
\begin{equation}
g_p(x^\star)-\min_{y\in S}g_p(y)\le \min_{y\in S}\|x^\star-y\|
\end{equation}
Using $\min_{y\in S}g_p(y)=\phi_p(S)$, we obtain:
\begin{equation}
\phi_p(T)-\phi_p(S)\le \min_{y\in S}\|x^\star-y\|
\end{equation}
By the domination assumption (applied to $x^\star\in T$), $\min_{y\in S}\|x^\star-y\|\le D(T,S)$, hence:
\begin{equation}
\phi_p(T)-\phi_p(S)\le D(T,S)
\end{equation}
% We then upper bound $\phi_p(S)-\phi_p(T)$ symmetrically.
% For any $x\in T$, the triangle inequality implies
% \[
% g_p(y^\star)=\|y^\star-p\| \le \|y^\star-x\|+\|x-p\|=\|y^\star-x\|+g_p(x),
% \]
% and thus $g_p(y^\star)-g_p(x)\le \|y^\star-x\|$.
% Taking the minimum over $x\in T$ yields
% \[
% g_p(y^\star)-\min_{x\in T}g_p(x)\le \min_{x\in T}\|y^\star-x\|.
% \]
% Using $\min_{x\in T}g_p(x)=\phi_p(T)$, we get
% \[
% \phi_p(S)-\phi_p(T)\le \min_{x\in T}\|y^\star-x\|.
% \]
% By the domination assumption (applied to $y^\star\in S$), $\min_{x\in T}\|y^\star-x\|\le D(T,S)$, hence
% \[
% \phi_p(S)-\phi_p(T)\le D(T,S).
% \]
Similarly, we can upper bound $\phi_p(S)-\phi_p(T)$ symmetrically. Combining the two inequalities gives $|\phi_p(T)-\phi_p(S)|\le D(T,S)$.
\end{proof}

\subsection{Proof of Theorem 3.3}
% \begin{lemma}[Trajectory-level lower bound induced by point pivots]
% \emph{Lemma 3.2 (Trajectory-level lower bound induced by point pivots).
% \begin{theorem}
% \label{thm:submodular}
\emph{Theorem 3.3. The tightness-aware objective function $F(P)$ is monotone and submodular. Let $P^\star$ be the optimal pivot set of size $k$, and let $P_k$ be the set returned by the greedy algorithm. Then, the greedy algorithm returns a pivot set whose objective value is guaranteed to be within a constant factor of the optimum, i.e., $F(P_k) \ge (1 - 1/e)\, F(P^\star)$.
% we have
% $F(P_k) \ge (1 - 1/e)\, F(P^\star)$.
% \end{theorem}
}
\begin{proof}[Proof Sketch]
We formalize the above intuition by showing that the objective can be decomposed into a sum of monotone submodular functions defined on individual trajectory pairs.
Fix a trajectory pair $(T,S) \in \mathcal{S}$ and define the normalized contribution of a pivot $p$ as follows:
\begin{equation}
a_p(T,S) = \frac{|\phi_p(T) - \phi_p(S)|}{D(T,S)} \ge 0
\end{equation}
The per-pair objective can be written as follows:
\begin{equation}
g_{T,S}(P) = \frac{LB_P(T,S)}{D(T,S)} = \max_{p \in P} a_p(T,S)
\end{equation}
The function $g_{T,S}(P)$ is monotone, since adding elements to $P$ cannot decrease a maximum. It is also submodular because it exhibits diminishing returns, i.e., for any $A \subseteq B$ and any $p \notin B$,
\begin{equation}
g_{T,S}(A \cup \{p\}) - g_{T,S}(A)
\ge
g_{T,S}(B \cup \{p\}) - g_{T,S}(B),
\end{equation}
as the marginal gain equals $\max(0, a_p - \max_{q\in A} a_q)$, which is no smaller than $\max(0, a_p - \max_{q\in B} a_q)$.
Since $F(P)$ is an expectation (or empirical average) of $g_{T,S}(P)$ over sampled pairs, and nonnegative linear combinations preserve monotonicity and submodularity, $F(P)$ is also monotone submodular.
Applying the classical result of Nemhauser et al.~\cite{nemhauser1978analysis} for greedy maximization of a monotone submodular function under a cardinality constraint yields the stated approximation guarantee.
\end{proof}

\section{Additional Experimental Results}

\subsection{Analysis of Tightness-Aware and Hardness-Aware Pivot Selection}

We compare tightness-aware and hardness-aware pivot selection on Geolife and Chengdu datasets under DTW, DFD, and Hausdorff distance. Following Definition~\ref{def:tightness}, pivot-selection quality is measured using the average lower-bound tightness, $\mathrm{mean}(LB_P(T,S)/D(T,S))$, where $LB_P(T,S)$ denotes the lower bound induced by the selected pivot set $P$ and $D(T,S)$ denotes the corresponding ground-truth trajectory distance.

Results are reported for pivot budgets $k\in\{16,32,64,128\}$ and training-pair sets of size $|\mathcal{S}|\in\{128,256,512,1024\}$. When varying $k$, we fix $|\mathcal{S}|=512$; when varying $|\mathcal{S}|$, we fix $k=32$. To characterize both global and retrieval-critical behavior, we evaluate the selected pivot sets on two fixed evaluation sets, namely randomly sampled trajectory pairs and hard near-neighbor pairs. The latter are constructed by selecting the nearest neighbor ($m=1$) for each anchor trajectory under the corresponding trajectory distance.

\begin{figure}[t]
    \centering
    \includegraphics[width=\columnwidth]{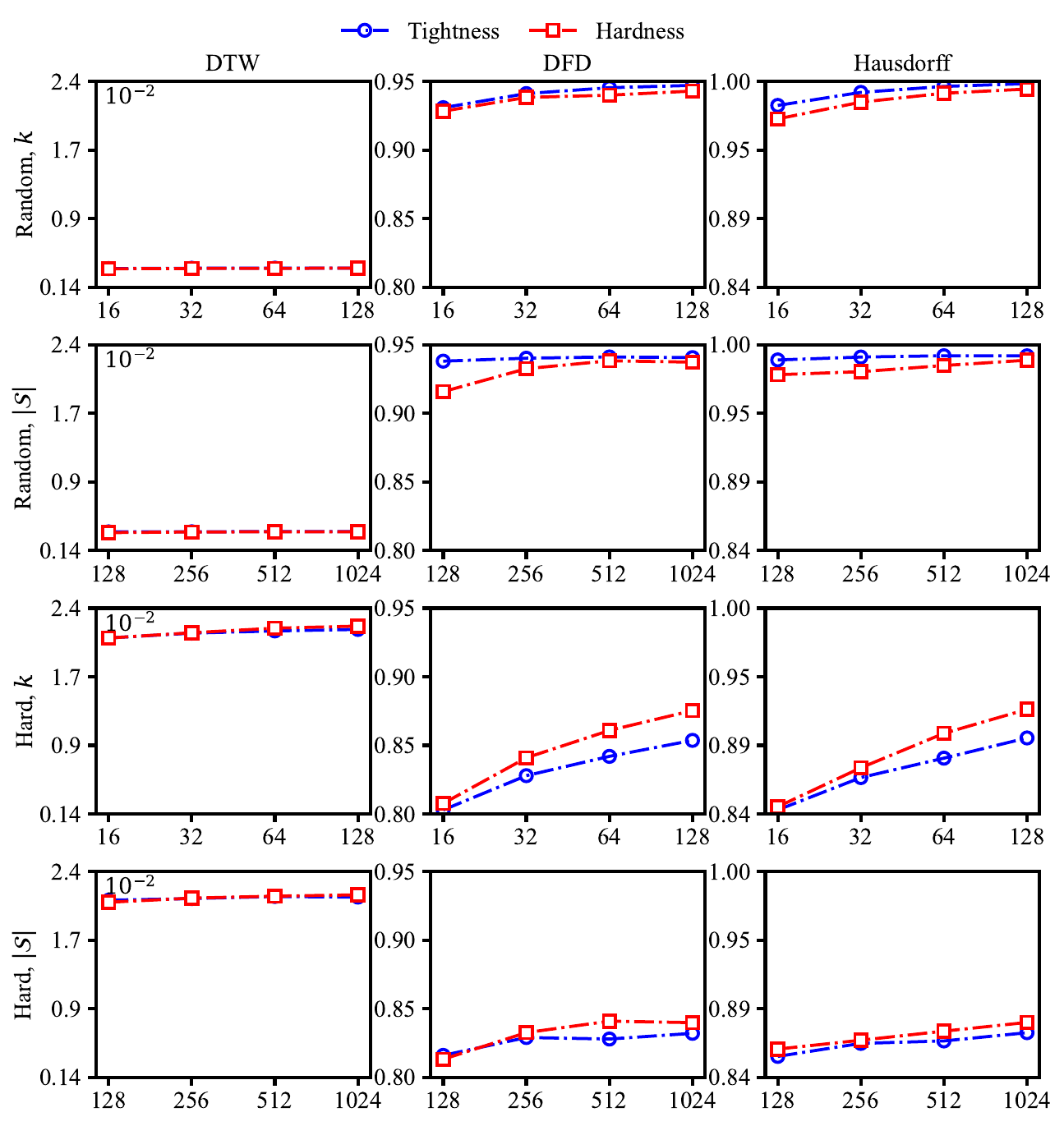}
    \caption{
    Average lower-bound tightness of tightness-aware and
    hardness-aware pivot selection on Geolife under DTW,
    DFD, and Hausdorff distance.
    }
    \label{fig:tightness_geolife}
\end{figure}

\begin{figure}[t]
    \centering
    \includegraphics[width=\columnwidth]{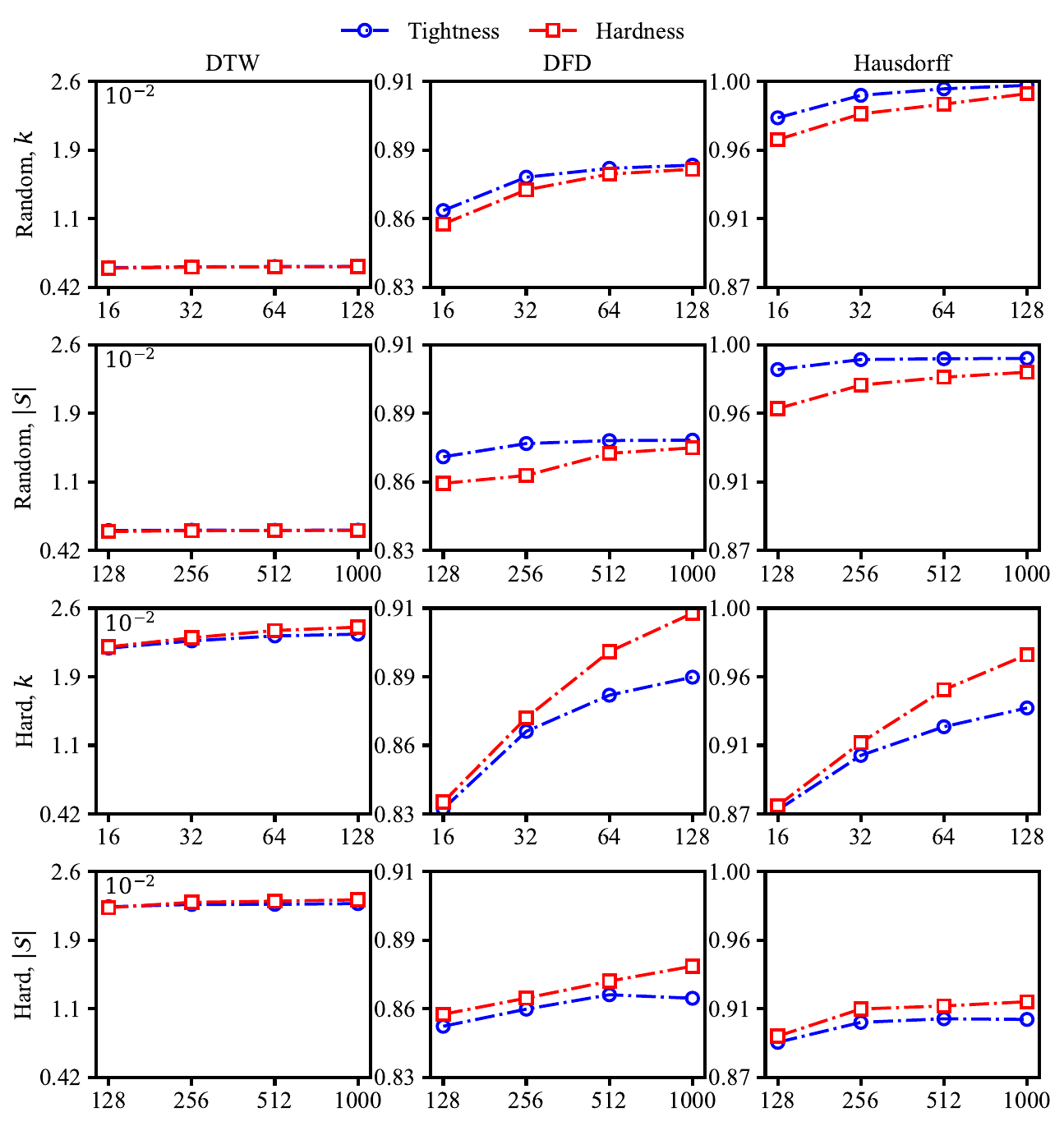}
    \caption{
    Average lower-bound tightness of tightness-aware and hardness-aware pivot selection on Chengdu under DTW, DFD, and Hausdorff distance.
    }
    \label{fig:tightness_chengdu}
\end{figure}

Figures~\ref{fig:tightness_geolife} and \ref{fig:tightness_chengdu} reveal a consistent pattern across both datasets. On randomly sampled trajectory pairs, tightness-aware selection generally achieves higher lower-bound tightness across all three distance measures, indicating stronger global lower-bound quality. In contrast, hardness-aware selection consistently produces tighter lower bounds on hard pairs, with the advantage becoming more pronounced as the pivot budget increases. 
% This behavior is expected because hardness-aware selection explicitly concentrates the optimization objective on retrieval-critical trajectory pairs rather than on uniformly sampled pairs.

These observations suggest that the two strategies optimize different aspects of representation quality. Tightness-aware selection emphasizes global lower-bound quality over the entire trajectory space, whereas hardness-aware selection prioritizes near-neighbor trajectory pairs that are more likely to influence retrieval rankings. As a result, tightness-aware selection is generally preferable when the objective is to maximize average lower-bound quality, while hardness-aware selection is more effective when the evaluation focuses on hard near-neighbor relationships.

Interestingly, higher tightness on hard pairs does not necessarily translate into better retrieval performance. For example, on the Chengdu dataset (see Table~\ref{tab:chengdu}), hardness-aware selection consistently achieves higher tightness on hard pairs, yet its HR@1 performance remains comparable to that of tightness-aware selection. This indicates that retrieval quality depends not only on lower-bound tightness, but also on whether the resulting lower-bound scores alter the relative ordering among the top-ranked candidates. Therefore, hardness-aware selection is most beneficial when improved tightness on hard pairs leads to meaningful ranking changes, whereas tightness-aware selection may remain competitive when the two strategies induce similar retrieval orderings.

\end{document}